\newenvironment{localscope}{}{}
\newcommand{\AComment}[2][0pt]{}
\newcommand{\cw}[1]{}
\newcounter{todocnt}
\newcommand{\hla}[1]{\hl{#1}}
\newcommand{\hlb}[1]{\hl{#1}}
\renewcommand{\hla}[1]{#1}
\renewcommand{\hlb}[1]{#1}
\begin{document}
\title{On Money as a Means of Coordination between Network Packets}
\author{Pavlos S. Efraimidis and Remous-Aris Koutsiamanis}
\institute{Dept. of Electrical and Computer Engineering \\
Democritus Univ. of Thrace, 67100 Xanthi, Greece \\
\{pefraimi, akoutsia\}@ee.duth.gr}

\maketitle

\begin{abstract}
In this work, we apply a common economic tool, namely money,
to coordinate network packets. In particular, we present a network economy, called PacketEconomy,
where each flow is modeled as a population of rational network packets, and these packets can self-regulate their access to network resources by mutually trading their positions in router queues. 
\hla{Every packet of the economy has its price, and this price determines if and when the packet will agree to buy or sell a better position.} We consider a corresponding Markov model of trade and show that there are Nash equilibria (NE) where queue positions and money are exchanged directly between
the network packets. This simple approach, interestingly, delivers improvements even when fiat money is used. We present theoretical arguments and experimental results to support our claims. 
\end{abstract}

\section{Introduction}
In their seminal work, Kiyotaki and Wright~\cite{KW89} examine the emergence of money as
a medium of exchange in barter economies. Subsequently, Gintis~\cite{Gi97,Gi00} generalizes
the Kiyotaki-Wright model by combining Markov chain theory and game theory.
Inspired by the above works, we propose the PacketEconomy where money is used 
as a coordination mechanism for network packets and prove that there are Nash equilibria where trades are 
performed to the benefit of all the flows.
In the PacketEconomy, specialization - the reason for the emergence of money as per Adam Smith (\cite[Chapter 4]{Sm1776}, cited in~\cite{KW89}) - 
originates from the diverse QoS requirements of
network flows. More precisely, various types of flows differ in their
tolerance for packet delays.

It is known that a large number of independent flows is constantly
competing on the Internet for network resources.
Without any central authority to regulate its
operation, the available network resources are allocated by
independent routers to the flows in a decentralized manner.
In this environment an
Internet flow may submit at any time an arbitrary amount of work to the network.
If the packets of the flow are successfully delivered, the flow may continue with the same
sending rate or even gradually increase it.
In case of congestion at one or more nodes in the flow's path, the
flow experiences delays and packet losses.
Then, the flow is expected to reduce its packet rate with an appropriate flow control algorithm,
like the AIMD-based algorithms for TCP-flows\footnote{AIMD stands for Additive Increase Multiplicative Decrease
and TCP for the Transmission Control Protocol of the Internet protocol suite.}.
This way, each flow independently decides on its submission rate.
The apparent lack of coordination between the independent
flows leads the Internet to an ``anarchic'' way of operation
and gives rise to issues and problems that can be addressed with
concepts and tools from algorithmic game theory.

Two representative works on applying game theory to network problems
are~\cite{KP99,Pa01}.
Certain game-theoretic approaches to congestion problems
of the Internet, and especially the TCP/IP protocol suite,
are discussed in~\cite{Sh95,ASKSP02,GJS04,ETM10}.
A combinatorial perspective on Internet congestion problems is
given in~\cite{KKPS00}. The focus of the above works and the present 
paper is on sharing the network resources between selfish flows.
In this work, however, we propose an economy where packets belonging to selfish flows may
interact directly with each other.

The use of economic tools like pricing, tolls and taxes as a means to regulate
the operation of networks and/or to support quality of service (QoS)
functionalities in the presence of selfish flows is, for example, discussed
in~\cite{Od99,GK95,CDR03STOC,CDR03EC,MV93,MB97}. In particular, the
Paris Metro Pricing approach - using pricing to manage traffic in the Paris Metro -
is adapted to computer networks in~\cite{Od99}. A smart market
for buying priority in congested routers is presented in~\cite{MV93}.
In~\cite{CDR03STOC,CDR03EC} taxes are used to influence the behavior of
selfish flows in a different network model. An important issue identified
in~\cite{CDR03EC} is that taxes may cause disutility to network users unless
the collected taxes can be feasibly returned to the users.
In our economic model this issue is naturally solved; trades take place
between the flows, so the money is always in the possession of the flows.

In this work, we apply a common economic tool, namely money,
to coordinate network packets. This is in contrast to much of the existing literature,
which aims to impose charges on Internet traffic, and to our knowledge,
this is the first work to propose economic exchanges directly between packets.
In particular, we present a network economy, called PacketEconomy, where
ordinary network packets can trade their positions in router queues.
The role of money in this approach is to facilitate the trades between the network packets.
Queue positions and money are exchanged directly between
the packets while the routers simply carry out the trades.
We show that, in this economy, packets can self-regulate their
access to network resources and obtain better services at equilibrium points.

\vspace{0.2cm}
\noindent
{\bf Contribution.} The main contributions of this work are:
\begin{itemize}
\renewcommand{\labelitemi}{$\bullet$}
\item A new game-theoretic model representing network packets as populations of rational agents.
In this model a network flow is represented as a population of in-flight packets that can make 
bilateral trades with other packets.
\item Application of bilateral trades and virtual money at a microeconomic level to  
support better coordination of rational network packets.
\item A plausible model for a very important and challenging problem like 
TCP/IP packet scheduling. The model may also apply to other decentralized 
coordination problems.
\item Application of an interesting combination of ergodic Markov chains and strategic
games within a new context.
\item Theoretical evidence and experimental results showing that the model has 
desirable Nash equilibria.
\end{itemize}

\vspace{0.2cm}
\noindent
{\bf Outline.} The rest of this work is organized as follows. 
Preliminaries are given in Section~\ref{sec:prelim}.
In Section~\ref{sec:EconomyForPackets} we describe the PacketEconomy.
Afterwards, we analyze two representative scenarios of the PacketEconomy
in Section~\ref{sec:scenarios} and discuss the underlying packet scheduling 
problem in Section~\ref{sec:sched}. 
The effect of trades in the PacketEconomy is examined in Section~\ref{sec:effect}. 
We then present an experimental evaluation of the PacketEconomy in Section~\ref{sec:exp} 
and make concluding remarks in Section~\ref{sec:conclusion}. 

\section{Preliminaries}
\label{sec:prelim}

In this section, we present a brief description of concepts that 
are required to understand the motivation and the results of this work.
More precisely, we discuss some introductory concepts of game theory, 
TCP networking, and modern Internet routers.
Readers familiar with some or all of these topics may skip the 
corresponding material.

\subsection{Game Theory Terminology}

\hla{
For convenience we define some basic concepts of game theory and optimization
as they are used in this work. More details on these introductory concepts can be found
in any related textbook.
A game is a mathematical model of the interaction among rational, mutually aware players.
In this work, selfish, strategic and rational are used interchangeably to denote players
whose objective is to maximize their own payoff.
The payoff of each player is determined by the outcome of the game, which in turn depends
on the decisions (strategies) of all players.
A strategy defines a set of moves or actions a player will follow in a given game.}

\hla{A mixed strategy is a randomized strategy that assigns a probability to each pure
strategy. The support of a mixed strategy is the set of actions to which it assigns a
strictly positive probability. 
A strategy profile is a set of strategies that includes one and only one strategy for every player.
Clearly, a strategy profile fully specifies a single execution of a game.
A Nash equilibrium is a strategy profile were no player has an incentive to unilaterally 
deviate from tis strategy.
}

\hla{
We also refer to the concept of a weak Pareto improvement, which (in this context) is
any change to the current strategy profile that makes every player at least as well off 
and at least one player strictly better off.
}

\subsection{Some TCP/IP Terminology}
\label{sec:tcp}
% !TeX root = wine2011.tex
The Transmission Control Protocol (TCP) belongs to the Internet protocol suite and is the main reliable connection-oriented data transmission
protocol of the Internet. TCP flows transmit their data by sending a series of packets. Assume a TCP flow that is ready to send
a large volume of data as a sequence of 220 packets, each of size 1024 bytes. 
In order to send the data in a controlled manner, a first parameter $w$ is used,  called the size of the \emph{congestion window}.

The TCP protocol dictates that the flow starts by submitting $w$ packets to
the network and then either waits until either a packet’s arrival is confirmed, 
normally by receiving a matching acknowledgement packet (ACK) within a certain time-frame,
or the time-frame passes, whereby the packet is considered lost.
As soon as the number of the in-flight packets of the flow is less than $w$, the flow submits
new packet(s); the result is that, at any moment in time, the flow can have at most $w$ packets in flight. Thus, the size $w$ of the
congestion window has a strong impact on the transmission rate of a flow~\cite{Ja88}. Consequently, the selection of an appropriate value for $w$
is a very critical task for each flow, and this is where the AIMD (Additive Increase Multiplicative Decrease) scheme is useful. 

The AIMD algorithm
is the most popular procedure for a TCP flow to constantly adapt its window size to the changing network conditions.
The basic principle of AIMD is that, for each successful packet delivery the flow increases its congestion window size additively
by an amount proportional to a parameter $\alpha > 0$ (usually $\alpha = 1$) and for each lost packet, the flow decreases its congestion
window multiplicatively by a parameter $\beta < 1$ (usually $\beta = 1/2$). The values of the $\alpha$ and $\beta$ parameters have a decisive
role on the behavior of the AIMD flow. A large value of $\alpha$ and/or $\beta$ makes the flow more aggressive, whereas a small value
makes it more temperate. More details can be found in common computer networks books like~\cite{Peterson:2007:Networks,Stevens:1994:Vol1}.

%[17] V. Jacobson, Congestion avoidance and control, in: Proceedings of SIGCOMM’88, ACM, New York, NY, USA, 1988, pp. 314–329.
%[31] L.L. Peterson, B.S. Davie, Computer Networks: A Systems Approach, fourth edition, Morgan Kaufmann, 2007.
%[34] W.R. Stevens, TCP/IP Illustrated, Volume 1: The Protocols, Addison-Wesley, 1994.

\subsection{Network Routers}
\label{sec:hw}
% !TeX root = wine2011.tex
Hardware-based routers, such as those commonly produced by Cisco, fall into two large categories based on their maximum throughput: High-end routers and medium/low-end routers.
High-end routers are typically employed in backbone networks and thus need to support extremely high throughput. To achieve this, they employ fixed-function dedicated and highly parallel hardware computation units (NPUs) as well as specialized high-speed memory (TCAM).
However, this comes at the cost of flexibility and customizability, as the algorithms which can be used by the router while maintaining its high-speed processing as predetermined and implemented into hardware. Some parameters may be configurable but only to the extent predetermined by the manufacturer. Often, for the target application these limitations may not be a problem, since backbone routers often do not have enough context in order to make flow-dependant routing choices. For example, one limitation which affects our system as well, is that it is impossible to perform packet re-ordering within the queue (the queue is strictly FIFO).
If higher flexibility is desired, it \emph{is} possible in many cases to use custom algorithms within these routers, however this is done at the expense of bypassing a part of the hardware-based pipeline through a software-based one. The immediate effect is that throughput drops significantly.

While these trade-offs have to do with high-end backbone routers, lower-cost middle- and low-end routers, which do not need to provide the same throughput since they are typically used near the leafs of the network, largely do away with the specialized and costly hardware implementation and use a software pipeline. As a result, it is much easier to implement custom algorithms on this class of routers.

\section{An Economy for Packets}
\label{sec:EconomyForPackets}
The PacketEconomy is comprised of a network model with selfish flows,
a queue that supports packet trades, a currency and a specific economic goal.
The solution concept is the Nash equilibrium (NE), i.e., 
a profile of the game in which no player has anything to gain by changing only his/her own strategy unilaterally.

\begin{figure}[h!]
\begin{minipage}[b]{0.38\linewidth}
\centering
\includegraphics[width=\textwidth]{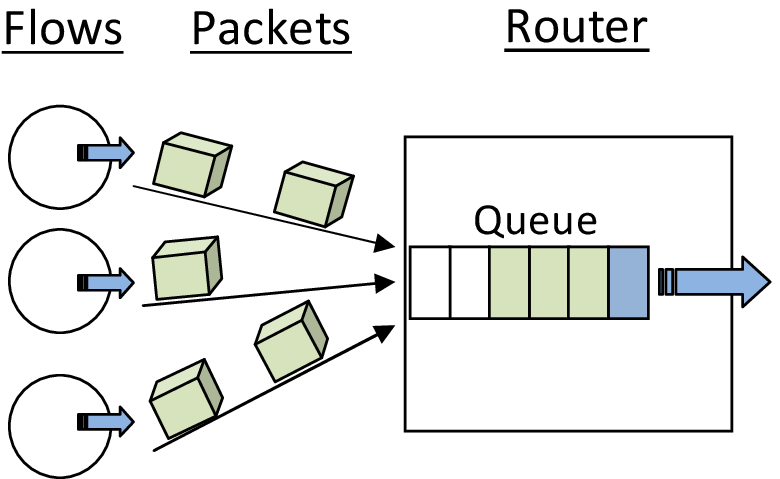}
\caption{The network model with the flows and their packets, the router, and the router queue.\\}
\label{fig:NetworkModel}
\end{minipage}
\hspace{0.2cm}
\begin{minipage}[b]{0.55\linewidth}
\centering
\includegraphics[width=\textwidth]{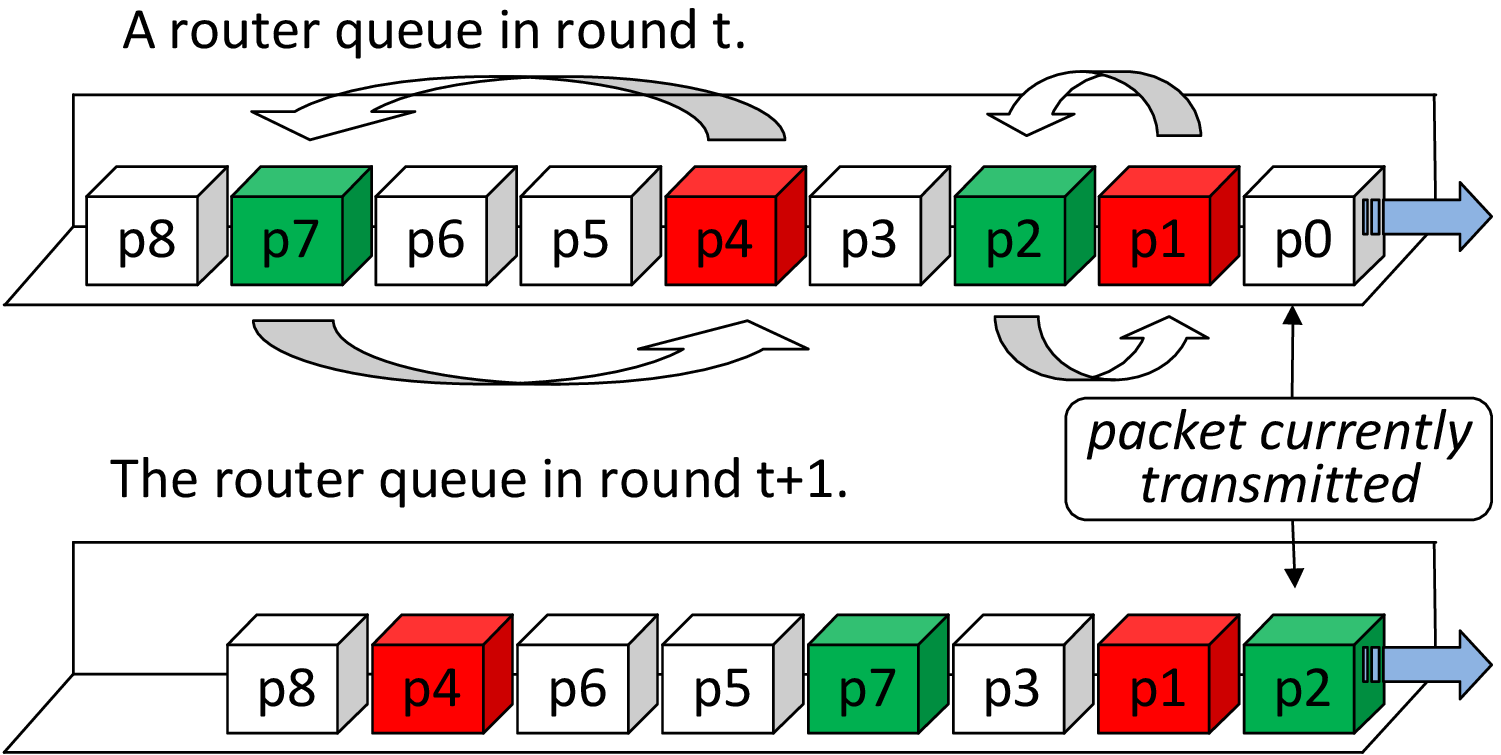}
\caption{The state of a router queue in two successive rounds, t and t+1. Two trades are performed in round t; 
one between the packet pair (p1,p2) and one between the pair (p4,p7).}
\label{fig:SimpleTrade}
\end{minipage}
\end{figure}

\textbf{The Network Model.}
We assume a one-hop network with a router R and a set of N flows, as shown in Figure~\ref{fig:NetworkModel}.
This setting is equivalent to the common dumbbell topology used for the analysis of many network scenarios,
including the seminal paper of Chiu and Jain~\cite{CJ89} on the AIMD algorithm.
The router R has a FIFO DropTail queue with a maximum capacity of Q packets and operates in rounds.
In each round the first packet (the packet at position 0 of the queue) is served.
At the end of the round the first packet reaches its destination.
Packets that arrive at the router are added to the end of the queue.

\textbf{Packet Trades.}
At the beginning of each round all packets in the queue are shifted one position ahead.
A packet that enters the queue in this round, occupies the first free (after the shift) 
position at the end of the queue. After the move and for the rest of the duration of the round,
the packet that has reached position zero is served.
At the same time, the other packets
in the router queue are simply waiting. These idle packets can engage in trades.
During each router round a fixed number $b$ of trading periods take place.
In each trading period the idle packets are matched randomly in pairs
with a predefined pairing scheme.
Each packet pair can perform a trade, as shown in Figure~\ref{fig:SimpleTrade},
provided
the negotiation performed between them leads to an agreement.
The way the trades take place at a microeconomic level
between paired packets resembles the models of~\cite{Gi97,KW89}
where agents meet in random pairs and can make trades.
In the theoretical analysis we assume a random-pairing scheme,
which corresponds to a well-mixed population, where packets 
are equally likely to be paired with any other packet. 

\textbf{Packet Delay.}
The packet delay $d_p$ of a packet $p$ that starts at position $k$
of the zero-based queue and does not make any trade is $k+1$ rounds (Figure~\ref{fig:delays}).
If, however, the packet engages in trades and buys a total of $r_b$ router rounds
and sells $r_s$ router rounds, then its packet delay $d_p$, including the time to be served,
becomes $d_p = k + 1 + r_s - r_b$ rounds. A packet may have an upper bound $d_{p,\max}$ on its
delay; for delays larger than $d_{p,\max}$ the value of the packet becomes zero and the packet 
will not voluntarily accept such delays (that is, it will not sell).

\textbf{Details.}
The router operates in rounds and can serve one packet in each one.
All packets are assumed to be of the same size and no queue overflows occur.
In generating the random packet pairs the use of predefined pairing reduces
the computational burden and avoids stable marriage problems.
We make the plausible assumption that flows with different QoS
preferences are competing for the network resources. We also make the
assumption that the preferences of each flow can be expressed with a utility function
for its packets. Thus, packets with different utility functions will, in general, co-exist in 
the router queue.

\textbf{Packet Values.}
For each packet there is a flow-specific decreasing value function $v_p(d)$
which given the delay $d$ of a packet reveals its value. 
The value function of each flow must be encoded
onto each packet and its computational requirements should be low in order not
to overload the router. A class of simple value functions are
$v_p(d) = \max \{ v_{\max} - c_p \cdot d, 0 \}$ where $c_p$ is the cost per unit
of delay (Figure~\ref{fig:values}).
Anytime during the packet's journey, its value can be estimated via the $v_p(d)$ function.
However, when the packet arrives at the destination, its value is finalized. 

\emph{In the PacketEconomy every packet has its compensatory price $p$. For prices lower than $p$,
the packet is ready to buy better queue positions and for higher prices higher than $p$ it is ready 
to sell its position, given that the extra delay will not cause it to exceed its maximum delay limit.
}

\begin{figure}[!h]
\subfloat[Packet delay terminology for p4]
{\label{fig:delays}\hspace{0.04\textwidth}\includegraphics[width=0.45\textwidth,height=0.25\textwidth]{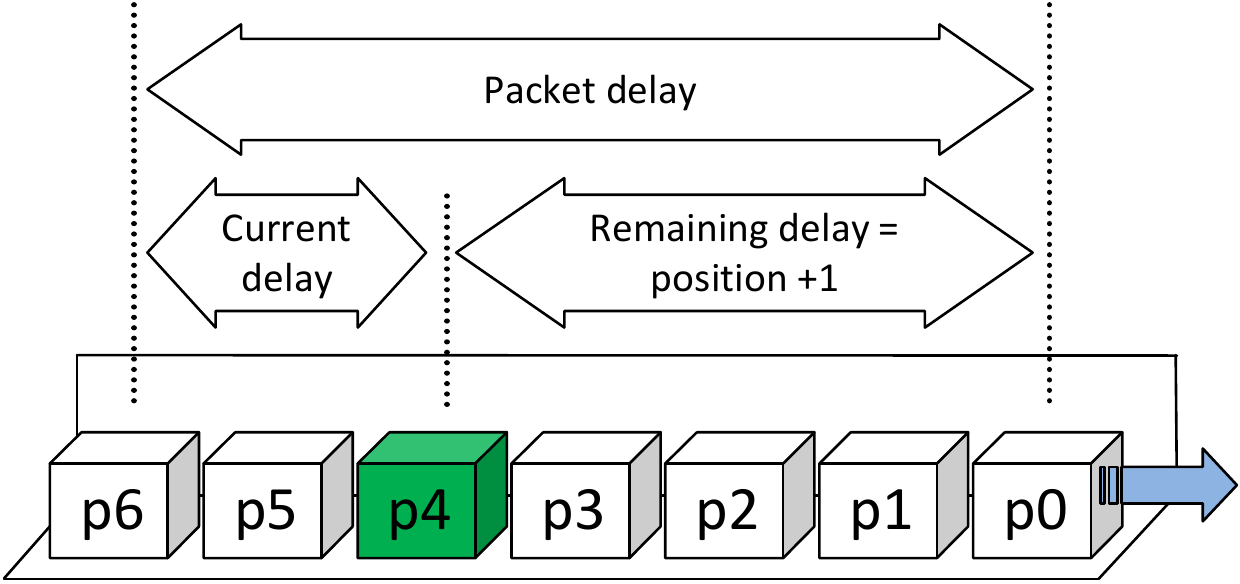}\hspace{0.01\textwidth}}
\hspace{0.02\textwidth}
\subfloat[Two simple packet value functions]
{\label{fig:values}\hspace{0.01\textwidth}\includegraphics[width=0.42\textwidth,height=0.25\textwidth]{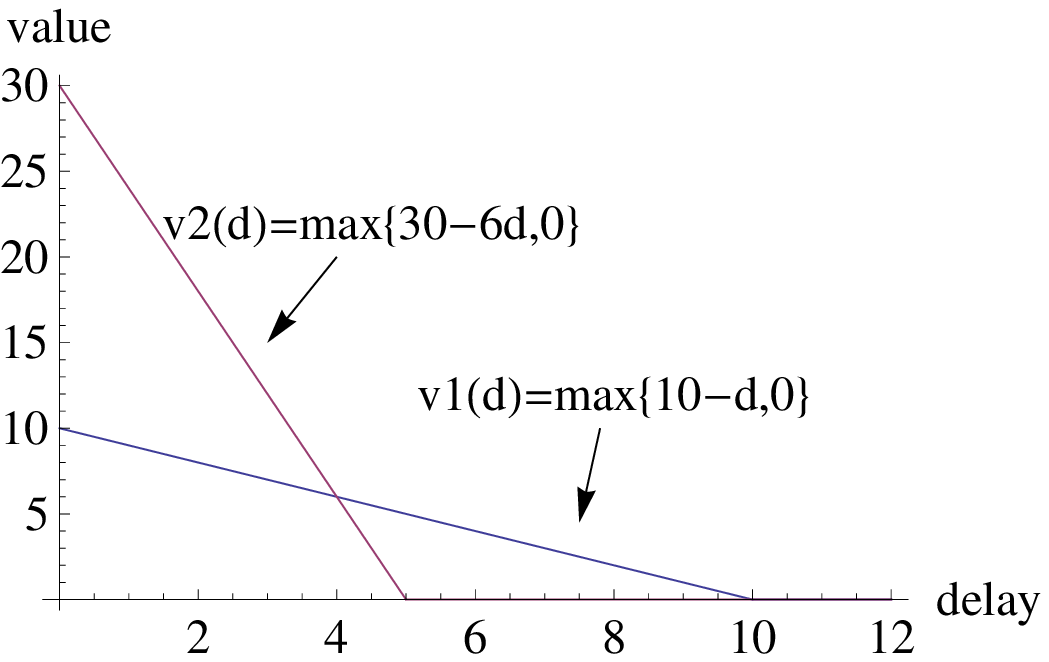}\hspace{0.04\textwidth}}
\caption{Delays and Packet Values.} \label{fig:DelayValues}
\end{figure}

\textbf{Inventories.}
Every time a packet is delivered in time, wealth is created for the flow that owns
the packet. Each packet $p$ has an inventory $I_p(t)$ containing two types of indivisible 
goods or resources;
the packet delay $d_p(t)$ and the money account $a_p(t)$.
Note that delay bears negative value, whereas money represents positive value.
We assume positive integer constants $s_a$, $s_b$ and $s_d$, such that 
$a_p(t) \in \{-s_a,\ldots,s_b\}$ and $d_p(t) \in \{0,\dots,s_d\}$. 
The inventory also contains the current position $pos_p(t)$ of the packet in the queue
(i.e., the remaining delay until it is served) if it is waiting in the queue.
When the packet reaches its destination, the contents of
the inventory of the packet are used to determine its utility (amount of wealth).
This utility is then reimbursed to the flow that owns the packet and a new packet
of the same flow enters the queue.
\hla{An inventory is called admissible, if the delay of the packet does not exceed its maximum 
delay. A packet would not agree to trade an admissible inventory state for a non-admissible one.
We assume that all packets start with an admissible inventory when they enter the queue.}

\textbf{Benefit and Utility.}
Since a packet has two types of resources that bear value, that is, the packet value and the budget of a packet,
we define the notion of the packet benefit as the sum of the value of a packet plus/minus the budget of the packet.
Then we use the benefit concept to define the utility function of the packet.
For fixed rate flows the benefit and the benefit rates
are equivalent and can either be used as the utility of the packet.
For Window-based flows the utility function is the benefit rate, i.e. the benefit per router round.

\textbf{Trades.}
The objective of each packet is to maximize its utility.
Thus, when two packets are paired in a trading period, their inventories and their
trading strategies are used to determine if they can agree on a mutually profitable trade,
in which one packet offers money and the other offers negative delay.
The obvious prerequisite for a trade to take place is that both packets prefer their
post-trade inventories to their corresponding pre-trade inventories.
For this to be possible there must be ``surplus value'' from a potential trade.
In this case both packets can benefit, i.e., increase their utility, if they come to
an agreement.

\subsection{Flow Types and the Cost of Delay}
The delay that a packet experiences has a direct impact on its value, which is a decreasing function of the delay. 
This impact depends on the type of the flow. 
Window-based flows employ a feedback-based mechanism, the congestion window, which determines the maximum number 
of packets that the flow may have in-flight. A brief description on these TCP/IP related concepts can be found in Section~\ref{sec:tcp}.
The consequence of using a congestion window is that there is an additional, secondary, effect of the packet delay on the flow's wealth. Every packet that is in-flight
occupies one of the available positions in the congestion window of a window-based flow. The more one
packet delays its arrival, the longer a following packet will have to wait to use the occupied window position. Therefore, the impact of packet delays for window-based flows is twofold; the decreased value of the delayed packet
and the reduced packet rate. 
On the other hand, for rate-based flows which submit packets with some given rate, there is no other consequence due to packet delays beyond the reduced value of 
the packets.

If $d = d_p(t)$ and $p = pos_p(t)$ are the delay and the position of a packet, 
then a trade that changes the delay to $d'$, also changes the 
value of the packet from $v_p(d)$ to $v_p(d')$. For rate-based flows, the difference
between these two values determines the compensatory price for the packet.

Assume a rate-based packet $p$ with balance $\alpha_1$ and 
delay $d_1 < d_{p,\max} - d_\epsilon$, for some $d_\epsilon$.
When a trade changes the delay from $d_1$ to $d_2=d_1+d_\epsilon$, then this 
also changes the value of the packet from $v(d_1)$ to $v(d_2)$. 
The difference between these two values determines the compensatory price 
$p$ for the packet.
\begin{equation}
\label{equ:priceRateBased}
p = v(d_1) - v(d_2) = v(d_1) - v(d_1 + d_\epsilon) = c_p d_\epsilon \; . 
\end{equation}
At this price, the utility of the packet remains unchanged after the trade.
A packet would agree to sell for a price $\rho_s > \rho$, or to buy for $\rho_b < \rho$.

For window-based flows, however, the price estimation needs more attention.
In this case the average benefit per round (benefit rate) is an appropriate measure 
for the utility of each packet of the flow. 
Assume a window-based packet $p$ with delay $d_1 < d_{p,\max} - d_\epsilon$, 
value $v_1 = v(d_1)$ and account balance $\alpha_1$.
Before the trade the utility (benefit rate) is $r_1 = (v_1+\alpha_1)/d_1$. 
If the packet agrees to
trade its position and to increase its delay by $d_\epsilon$, then
the new benefit rate is $r_2 = (v_2+\alpha_2)/d_2$. 
Let $\rho$ be the compensatory price for the trade. Then, by setting $r_1=r_2$ we obtain
\[
\frac{v_1+\alpha_1}{d_1} = \frac{v_2+\alpha_2}{d_2}
\Rightarrow \frac{V - c_p d_1 +\alpha_1}{d_1} = \frac{V - c_p (d_1+d_\epsilon)+(\alpha_1 + \rho)}{d_1+d_\epsilon} \Rightarrow \]
\begin{equation}
\label{equ:priceWindowBased}
 \rho = (V+\alpha_1)\frac{d_\epsilon}{d_1} \; .
\end{equation}

\noindent The above expression for the price ensures that the utility function
of the packet remains unchanged.
A packet would agree to sell its position, for a price $\rho_s > \rho$, or to buy a
position ($d_\epsilon < 0$) for $\rho_b < \rho$. 

\section{Equilibria with Monetary Trades}
\label{sec:scenarios}

\newcommand{\allKN}[1]{\forall #1 \in \mathbb{N}}
\newcommand{\allKX}[2]{\forall #1 \in \{1 \ldots #2 \}}
\newcommand{\allKXS}[3]{\forall #1 \in \{#2 \ldots #3 \}}

In this section, we illustrate the PacketEconomy approach in two representative scenarios
and discuss some auxiliary tools that are needed to support the operation of the PacketEconomy.

\subsection{Scenario 1}
\label{sec:ScenarioNoFiat}
This is a simple scenario that produces an interesting configuration.
It consists of a set of $N$ window-based flows $f_i,$ for $i \in \{1\dots N\}$,
each with a constant window size $w_i$, and $\sum_i w_i = q$.
When a packet is served by the router it is immediately replaced by an identical packet
submitted by the same flow. This is a simplifying but plausible assumption. In reality,
when a flow packet arrives at its destination, a small size acknowledgment packet (ACK) 
is submitted by the receiver. When the sending flow receives the ACK it submits a new
identical packet that immediately enters the queue.
We assume $b=1$ trading period 
per router round but in general there can be any constant number of trading periods per router 
round.

\textbf{Failure states.}
For each packet there is a small probability $p_f$ for an extra delay of $d_f$ rounds, where $d_f$ is a discrete random variable in $\{1,2,\dots,q-1\}$.
These delays correspond to potential packet failures of real flows, and occur
between the service of a packet and the submission of its replacement.
The presence 
of these failures will be useful to show the ergodicity of the Markov process of the economy.
By convention, the delay $d_f$ is added to the delay of the packet that has just been served. If more than one packets enter the queue at the same time (synchronized due to delays), 
their order in the queue is decided upon uniformly at random. 
A packet that does not participate in any trade and does not suffer delay due to failure will
experience a total delay of $q$ rounds.

{\bf Packet states and strategies.}
The state $\tau_p(t)$ of a packet $p$ in round $t$ is a pair
$\tau_p(t)=(I_p(t), rel_p(t))$, where $I_p(t)$ is the inventory of the packet and $rel_p(t)$, which is meaningful only in failure states, is the remaining number of failure rounds for the packet.
The state of all packets of the economy in round $t$ determines the state of the whole economy $\tau(t) = \prod_{p=0}^{q-1} \tau_p(t)$.
From a packet's point of view, a trade is simply an exchange of its inventory state (budget, delay and position)
with a new one. Consequently, a pure strategy of a packet is a complete ordering of the possible states
of its inventory. 
In each round the waiting packets move by default one position ahead and, thus, enter 
a new inventory state. 
We assume that the packet ignores the impact of its state and strategy on the state of the packet population.
In every trading period the packet simply aims at myopically improving its state.

{\bf A trade.}
Assume that two packets $p1$ and $p2$ trade with $pos_{p1}(t) < pos_{p2}(t)$, that is, $p1$ is closer to the serving end of the queue than $p2$. Then, in the next round $t' = t+1$, it holds that $pos_{p1}(t') = pos_{p2}(t) - 1$ and $pos_{p2}(t') = pos_{p1}(t) - 1$. If $\rho$ is the agreed upon price of the trade, then also $a_{p1}(t') = a_{p1}(t) + \rho$ and $a_{p2}(t') = a_{p2}(t) - \rho$. A prerequisite for the trade to take place is that $\rho_s \leq \rho_b$, where
$\rho_s$ is the asking price of the seller and $\rho_b$ the offered price by the buyer.
To that end each packet has to calculate the utility loss/gain it will incur if it trades its position,
taking into account the impact on the packet value $v_p$ and on the rate of the packet (for window-based flows).

\begin{definition}
Let $\tau(t)$ be the state of the economy in round $t$.
\end{definition}

\begin{lemma}
\label{lem:Markov}
$\tau(t)$ is an ergodic Markov chain.
\end{lemma}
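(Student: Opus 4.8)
The plan is to show that $\tau(t)$ is a finite-state, time-homogeneous Markov chain and then verify ergodicity by establishing irreducibility and aperiodicity via the standard criterion that some power of the transition matrix is strictly positive (equivalently, that every state is reachable from every other in a bounded number of rounds). First I would confirm the Markov property: by the definition of the packet state $\tau_p(t) = (I_p(t), rel_p(t))$ and the economy state $\tau(t) = \prod_{p=0}^{q-1} \tau_p(t)$, the transition from round $t$ to round $t+1$ is governed only by (i) the deterministic one-position shift of waiting packets and the service of the packet at position zero, (ii) the random pairing and the (myopic, state-determined) trade decisions during the $b$ trading periods, and (iii) the independent failure events with probability $p_f$ and random extra delay $d_f$. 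None of these depend on history prior to round $t$, so $\tau(t)$ has the Markov property, and since the pairing distribution, the failure distribution, and the strategies are fixed across rounds, the chain is time-homogeneous.

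Next I would argue finiteness of the state space. Each packet's inventory ranges over $a_p(t) \in \{-s_a, \ldots, s_b\}$ and $d_p(t) \in \{0, \ldots, s_d\}$ together with a position in a queue of capacity $Q$ (here $q$), and $rel_p(t)$ ranges over a bounded set of remaining failure rounds bounded by $\max d_f \le q-1$; with exactly $q$ packets in the system, the product state space is finite. Finiteness reduces ergodicity to irreducibility plus aperiodicity.

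The heart of the proof is irreducibility, and this is the step I expect to be the main obstacle. The claim to establish is that from any admissible economy state one can reach any other admissible state with positive probability in finitely many rounds. Here the failure states are the essential tool, as the paper already flags: because each packet independently incurs an extra delay $d_f$ with positive probability $p_f$, and $d_f$ takes any value in $\{1, \ldots, q-1\}$, the failures inject enough randomness to reshuffle the relative ordering and delays of packets, and when packets re-enter the queue simultaneously their order is chosen uniformly at random. I would use these moves to show that money can be redistributed and queue positions permuted so as to connect any two configurations. The delicate part is that trades are constrained (a packet only trades toward a state it myopically prefers, and never into a non-admissible inventory), so not every position swap or money transfer is directly available through trading; I would therefore lean primarily on the no-trade transitions (pure shifts plus failure-induced delays, which require no counterparty agreement) to generate the reachability graph, using failures to move money indirectly only insofar as the value/benefit bookkeeping permits, and otherwise treating money balances as carried along by the deterministic service-and-replacement mechanism. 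Verifying that the reachable set is genuinely all of the admissible state space, rather than a proper recurrent subclass, is where the real care is needed.

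Finally, aperiodicity would follow by exhibiting a self-loop: there is positive probability that in a given round no failure occurs ($1 - p_f$ per packet) and no trade is executed, so that the configuration of delays and balances (relative to positions, after accounting for the deterministic shift and the served-and-replaced packet) returns to a state already visited, giving a state with a self-transition of positive probability and hence period one. Combining a finite state space, irreducibility, and aperiodicity yields that $\tau(t)$ is an ergodic Markov chain, which is exactly the assertion of Lemma~\ref{lem:Markov}.
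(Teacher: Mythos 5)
Your proposal has a genuine gap, and it sits exactly where you flagged it: the irreducibility step. You set out to prove that \emph{every} admissible state is reachable from every other, and you propose to do so leaning primarily on no-trade transitions (shifts, service, failures). This cannot work. Budgets change \emph{only} through trades; the no-trade dynamics either carry a budget along unchanged or reset it when a packet is served and replaced. So from a state in which all budgets are zero, no sequence of shift/failure moves ever produces a state with a nonzero budget, and under strategy profiles that trade rarely or never, large parts of the state space are simply unreachable. The chain on the full state space is therefore \emph{not} irreducible in general, and the ``real care'' you defer to later cannot be supplied. The paper's proof sidesteps this entirely: it does not prove global irreducibility, but instead constructs a distinguished set of \emph{zero states} (all budgets zero, $d_p = pos_p + 1$ for every packet) and shows these are reachable with strictly positive probability from \emph{any} state, via a cascading-failure event --- the packet served in round $t$ fails for $q-1$ rounds, the next for $q-2$ rounds, and so on, so that after $q$ rounds all replacement packets enter the queue simultaneously, with zero budgets and a uniformly random ordering. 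This yields a single recurrent class (the states attainable from zero states), which is finite, irreducible and aperiodic, and the Fundamental Theorem of Markov Chains then gives ergodicity; states outside this class are transient and harmless. In short, the correct key idea is ``one globally reachable recurrent class,'' not ``irreducibility of the whole space.''

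A secondary flaw: your aperiodicity argument via a self-loop does not hold as stated. In a round with no failures and no trades the state does \emph{not} return to itself in general --- each waiting packet's recorded delay advances relative to its position, the served packet is replaced by a fresh one, and the flow identities rotate through the queue --- so the ``no failure, no trade'' round is not a self-transition. Aperiodicity, like irreducibility, is most cleanly obtained inside the recurrent class generated by the zero states, where the variable-length failures give return times with greatest common divisor one. Your Markov-property and finiteness arguments, on the other hand, match the paper's and are fine.
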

\begin{proof}
Assume $b=1$ trading period per round.
In each round the economy moves to a new state with transition probabilities that depend only
on the current state and the strategies of the packets. Let $\sigma_p$ be a pure strategy of
each packet $p$ of a flow and $\sigma$ be a pure strategy profile of the whole economy. 
Then, there is a corresponding transition probability matrix $P^{\sigma}$ for the economy.
Let $\sigma_m$ be a mixed strategy profile of the whole economy. Then the corresponding
transition probability $P^{\sigma_m}$ of the economy for $\sigma_m$ 
is an appropriate convex combination of the transition matrices of the supporting pure strategies. 
In case of multiple trading periods per round $(b > 1)$, the economy makes $b$ state transitions 
per round.

The number of potential states for a packet is finite and, consequently, the number of states 
for the whole economy is also finite. 
\begin{definition}
\hla{A zero state $\tau_0$ is a state of the economy in which all packets have zero budget and each packet $p$ has delay $d_p(t_0) = pos_p(t_0) + 1$, where $t_0$ is the current round of the router. }
\end{definition}
Assume that in round $t$ the packet at position $0$ fails for $q-1$ rounds, in round $t+1$ the next packet at position $0$ fails for $q-2$ rounds etc. 
Then after $q$ rounds all new packets will simultaneously enter the queue. Each packet will have zero budget
and by definition their ordering will be random. This also means that for each packet $p$,  $d_p(t) = pos_p(t) + 1$. Thus, in round $t+q$ the economy will be in a zero state.
The probability for this to happen is strictly positive and thus each zero state $\tau_0$ is recurrent. Since the number of states of the economy is finite, the states that are attainable from zero states like $\tau_0$ form a finite, aperiodic, irreducible set of states. 
From the Fundamental Theorem of Markov Chains (see for example~\cite{Motwani:1995:RA} or~\cite{Mitzenmacher:2005:PCR}) we know that any finite, irreducible, and aperiodic Markov chain is ergodic. This completes the proof of Lemma~\ref{lem:Markov}.
\end{proof}

\begin{lemma}
\label{lem:sce1pi}
For each pure strategy profile $\sigma$ of the economy, there is a unique stationary distribution $\pi_\sigma$ of the economy.
\end{lemma}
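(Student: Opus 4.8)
The plan is to obtain this statement as an essentially immediate consequence of Lemma~\ref{lem:Markov}, specialized to a fixed pure strategy profile. First I would fix an arbitrary pure strategy profile $\sigma$ of the economy. As observed in the proof of Lemma~\ref{lem:Markov}, such a $\sigma$ induces a single transition probability matrix $P^{\sigma}$ on the finite state space of the economy, so that $\tau(t)$ evolving under $\sigma$ is a genuine time-homogeneous, finite-state Markov chain. This reduces the claim to a statement about one concrete chain.

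The second step is to invoke ergodicity for this particular chain. Lemma~\ref{lem:Markov} was already established at the level of pure strategy profiles: the failure mechanism, which is independent of the packets' trading strategies, drives the chain to a zero state $\tau_0$ with strictly positive probability from any state, and the states attainable from $\tau_0$ under $P^{\sigma}$ form a finite, irreducible, aperiodic class. Hence, for this $\sigma$, the chain $\tau(t)$ is ergodic. The third step is then to apply the Fundamental Theorem of Markov Chains (see~\cite{Motwani:1995:RA} or~\cite{Mitzenmacher:2005:PCR}): a finite, irreducible, aperiodic Markov chain possesses a unique stationary distribution. Denoting it $\pi_\sigma$ yields exactly the claim, and since $\sigma$ was arbitrary the conclusion holds for every pure profile.

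The point requiring the most care — and the closest thing to an obstacle — is \emph{uniqueness} rather than existence. The full state space may contain configurations that are never reached from any zero state (for example extreme budget or non-admissible states that the dynamics never enter), and these are transient under $P^{\sigma}$. The hard part will be to argue cleanly that all recurrent states lie in the single communicating class identified in Lemma~\ref{lem:Markov}, so that the chain has exactly one recurrent class. Once this is in place, the stationary distribution is supported entirely on that class, assigns probability zero to every transient state, and is therefore unique as a distribution on the full state space rather than merely on its irreducible core.

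Overall I expect the substance of the argument to already reside in Lemma~\ref{lem:Markov}; the remaining work here is the bookkeeping that turns ergodicity of the induced chain into existence and uniqueness of $\pi_\sigma$ for each pure profile $\sigma$.
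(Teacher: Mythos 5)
Your proposal is correct and follows essentially the same route as the paper: the paper's own proof simply specializes the ergodicity established in Lemma~\ref{lem:Markov} to the fixed pure profile $\sigma$ and cites the fundamental theorem for finite, aperiodic, ergodic chains (Theorem 7.7 of~\cite{Mitzenmacher:2005:PCR}) to get the unique stationary distribution $\pi_\sigma$. Your additional worry about transient states outside the class reachable from zero states is a reasonable point of rigor that the paper glosses over, but it resolves by the standard observation that every state reaches a zero state with positive probability, so there is only one recurrent class; it does not constitute a different approach.
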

\begin{proof}
For each pure strategy profile $\sigma$, the Markov chain of the economy has a finite number of states, is aperiodic and ergodic. Thus, it must have a unique stationary distribution $\pi_\sigma$
(see for example~\cite[Theorem 7.7]{Mitzenmacher:2005:PCR}).
\end{proof}

\noindent
By following this line of reasoning, an interesting argument which can be applied is that
given the stationary distribution of the economy, each trading period becomes a finite state game. 

\begin{lemma}
\label{lem:FiniteGame}
For every idle packet, each trading period of the economy corresponds to a finite strategic game.
\end{lemma}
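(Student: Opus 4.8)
The plan is to verify the three defining ingredients of a finite strategic game --- a finite set of players, a finite pure-strategy set for each player, and a well-defined real-valued payoff function --- reducing each to objects already constructed in Lemmas~\ref{lem:Markov} and~\ref{lem:sce1pi}. First I would fix a pure strategy profile $\sigma$ and invoke Lemma~\ref{lem:sce1pi} to obtain the unique stationary distribution $\pi_\sigma$. This stationary distribution is the device that renders the payoffs well-defined and is exactly the ``given the stationary distribution'' hypothesis alluded to in the paragraph preceding the lemma; without it the long-run benefit rate a packet seeks to maximize would be an ill-posed quantity.

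For the players, I would take the set of idle packets participating in the trading period. Since the queue holds at most $q$ packets, there are at most $q$ players, so the player set is finite. For the strategies, recall that a pure strategy of a packet is a complete ordering of its possible inventory states, and that an inventory consists of a budget $a_p \in \{-s_a,\ldots,s_b\}$, a delay $d_p \in \{0,\ldots,s_d\}$, and a position bounded by the queue capacity. As already observed in the proof of Lemma~\ref{lem:Markov}, the number of inventory states per packet is therefore finite, so the set of orderings of these states --- the pure-strategy set of each player --- is finite as well, its cardinality being bounded by the factorial of the number of inventory states.

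The remaining and only delicate step is the payoff function. Given the random pairing scheme together with the packets' orderings, the trading period maps each strategy profile to a distribution over post-trade inventory states for every packet. I would define the payoff of a packet as its expected utility --- the benefit rate for window-based flows, the benefit for rate-based flows --- evaluated under $\pi_\sigma$ and averaged over the pairing randomness and the resulting trade outcome. The hard part is arguing that this expectation is a well-defined finite real number rather than a divergent or initial-condition-dependent long-run average: this is precisely where ergodicity (Lemma~\ref{lem:Markov}) and the uniqueness of the stationary distribution (Lemma~\ref{lem:sce1pi}) enter, since together they guarantee that the long-run benefit rate exists, is independent of the starting state, and depends only on the strategy profile. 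With finitely many players, finite pure-strategy sets, and a well-defined real-valued payoff, the trading period satisfies the definition of a finite strategic game, which completes the proof.
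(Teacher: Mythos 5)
Your proposal is correct and rests on the same central device as the paper's proof: use ergodicity (Lemma~\ref{lem:Markov}) and uniqueness of the stationary distribution (Lemma~\ref{lem:sce1pi}) to make the long-run payoff of a trading strategy a well-defined real number, and combine this with finiteness of the player set and of the inventory-state space to obtain a finite game. The one substantive difference is the treatment of mixed strategies. You define payoffs only on pure profiles $\sigma$ via $\pi_\sigma$ and implicitly rely on the standard multilinear mixed extension, i.e., a randomizing packet draws a pure ordering once and commits to it. The paper instead defines the payoff of a mixed profile $\sigma_m$ directly as the expected value under $\pi_{\sigma_m}$, the stationary distribution of the transition matrix $P^{\sigma_m}$ that is the convex combination of the pure-strategy matrices $P^{\sigma}$. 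These two definitions do not coincide in general: stationary distributions are not linear in the transition matrix, so the expectation under $\pi_{\sigma_m}$ generally differs from the $\sigma_m$-weighted average of the expectations under the individual $\pi_\sigma$'s. The paper's (Gintis-style) convention models packets that re-randomize in every period, and it is the payoff notion carried into Theorem~\ref{the:scenario1}; your convention yields a bona fide finite game to which Nash's theorem applies directly and multilinearly, but its mixed-strategy payoffs need not equal the economy's actual long-run average payoff under period-by-period randomization. Both constructions prove the lemma as stated; just be aware that the subsequent equilibrium argument in the paper is phrased in terms of the paper's payoff definition, so adopting yours would require re-justifying that step.
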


\begin{proof}
Let $\sigma_m$ be a mixed strategy of the whole economy and $P^{\sigma_m}$ the corresponding transition matrix of the Markov chain of the economy. 
\hla{Note that $P^{\sigma_m}$ is a convex combination of the transition
matrices $P^{\sigma}$ that correspond to the pure strategies $\sigma$ in the support of $\sigma_m$.}
Moreover, let $\pi_{\sigma_m}$ be the stationary distribution of the Markov chain for
transition matrix $P^{\sigma_m}$. We assume that the utility of each player (packet) 
for the profile $\sigma_m$ is the expected value of the player in the stationary 
distribution $\pi_{\sigma_m}$. In this way, we obtain for each trading period a finite 
game where every packet of the queue is a player. The strategy of the packet is its 
trading strategy. 
\end{proof}

\noindent
This leads us to the following theorem, which holds under plausible assumptions.

\begin{theorem}
\label{the:scenario1}
A NE of the economy exists where packets perform trades.
\end{theorem}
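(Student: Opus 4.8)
The plan is to obtain existence from a Nash fixed-point argument and then separately certify that trading actually occurs at the constructed profile. By Lemma~\ref{lem:FiniteGame}, each trading period is a finite strategic game $G$ whose players are the idle packets, whose pure strategies are the orderings of inventory states, and whose payoffs are the expected utilities in the stationary distribution $\pi_{\sigma_m}$. First I would check that these are legitimate game payoffs: the map from a mixed profile $\sigma_m$ to the transition matrix $P^{\sigma_m}$ is affine (it is the convex combination of the $P^{\sigma}$), and by Lemma~\ref{lem:Markov} and Lemma~\ref{lem:sce1pi} every such chain is ergodic with a unique stationary distribution, which depends continuously on $P^{\sigma_m}$. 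Hence each packet's expected utility is continuous in $\sigma_m$, the game $G$ satisfies the hypotheses of Nash's theorem, and a mixed Nash equilibrium exists. This settles the existence of a NE; the remaining work is to exhibit one in which trades are executed with positive probability, since the degenerate ``nobody ever trades'' profile is itself a (weak) equilibrium and must be excluded.

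To force trades, I would exploit the heterogeneity of the $N$ window-based flows. Because the flows have different parameters, the compensatory prices given by Equation~(\ref{equ:priceWindowBased}) differ across packet types, so whenever a packet with a high compensatory price is paired with one whose price is lower there is a strictly positive surplus interval with $\rho_s < \rho_b$. I would then consider the explicit profile $\sigma^{\dagger}$ in which every packet ranks its inventory states by utility and accepts any offered swap whose negotiated price lies weakly on its favorable side of its own compensatory price. Under the random (well-mixed) pairing, two packets of different types meet with positive probability, and at any price strictly inside $(\rho_s,\rho_b)$ both strictly improve their utility by the indifference computation behind Equation~(\ref{equ:priceWindowBased}); hence under $\sigma^{\dagger}$ trades occur with positive probability and the corresponding transitions appear in the stationary distribution.

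The substantive step is to verify that $\sigma^{\dagger}$ is a Nash equilibrium, i.e.\ that no packet gains by a unilateral deviation when the payoff is the long-run stationary expectation rather than the one-shot gain. Here I would argue that a deviation can only take two forms: refusing a swap that was strictly beneficial at the negotiated price, which forgoes positive surplus and so cannot raise the packet's stationary utility; or demanding terms outside the counterparty's acceptance set, which simply returns the packet to the no-trade status quo it would otherwise have kept. Since the compensatory price is exactly the indifference point, neither move helps, and because each packet treats the population state as exogenous (it ignores the impact of its own strategy on the stationary distribution, as stipulated in the packet-strategy model), the myopic per-trade improvement coincides with improvement of the stationary utility. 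I expect this consistency between the myopic trading criterion and the stationary-distribution payoff, together with ruling out that haggling for a better split could pay off on average, to be the main obstacle; I would address it by appealing to the indifference property of the compensatory price and to the fixed, protocol-determined splitting rule, which leaves no packet with a profitable manipulation of the price.
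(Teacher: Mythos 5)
You reproduce the paper's first step faithfully: Lemma~\ref{lem:FiniteGame} turns each trading period into a finite game, Nash's theorem yields a mixed equilibrium, and the trivial no-trade profile must be excluded separately. (One caveat even here: the payoff of a mixed profile is defined through the stationary distribution $\pi_{\sigma_m}$ of $P^{\sigma_m}$, which is \emph{not} the multilinear mixed extension of the pure-profile payoffs, so the classical finite-game form of Nash's theorem does not literally apply; the continuity you check is not sufficient without quasi-concavity in each player's own strategy. The paper shares this imprecision, so I set it aside.) Where you genuinely depart from the paper is in how a trading equilibrium is produced. The paper uses the restriction/relaxation device of Gintis: it first restricts every packet's strategy set so that some profitable trades must be accepted, invokes Nash's theorem on this restricted (still finite) game to obtain an equilibrium that necessarily contains trades, and then asserts that lifting the restriction creates no profitable unilateral deviation. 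You instead commit to one explicit profile $\sigma^{\dagger}$ --- accept every weakly profitable swap as measured by the compensatory price --- and attempt to verify directly that it is a Nash equilibrium.

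That direct verification is where your proposal has a genuine gap, and it is a gap the paper's technique is specifically designed to avoid. Your deviation analysis reduces everything to the claim that refusing a currently profitable swap cannot raise a packet's stationary utility. But the compensatory prices of Equations~\ref{equ:priceRateBased} and~\ref{equ:priceWindowBased} are static indifference prices: they equalize utility (or benefit rate) across a single trade and carry no continuation value. In the stationary-payoff game a packet's strategy does alter its own future trajectory (only the population state is treated as exogenous), so \emph{selective} acceptance is a legitimate deviation that your two-case dichotomy does not dispose of. Concretely, a packet has a finite stock of saleable delay because of its cap $d_{p,\max}$; under $\sigma^{\dagger}$ it spends that stock on the first marginally profitable pairings that arrive, whereas a deviator can reserve it for higher-surplus pairings, and nothing in the indifference property excludes that this strictly increases its stationary benefit rate. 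So $\sigma^{\dagger}$ may simply fail to be an equilibrium for some parameter settings. The paper never has to prove such a maximality statement: Nash's theorem applied to the restricted game already selects, in equilibrium, which profitable trades to accept, and only the deviations that abandon trading altogether remain to be handled in the relaxation step (a step the paper, admittedly, also only asserts under ``appropriate utility functions''). To repair your route you would either have to prove that maximal myopic trading is a best response to itself --- accounting for delay caps and option value --- or retreat to the paper's restricted-economy argument. Your added assumption of a fixed, protocol-determined price-splitting rule is also not part of the paper's model, though that is a minor point by comparison.
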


\begin{proof}
Since each trade is a finite game, the classic theorem of Nash~\cite{Nash:1950:Equilibrium,Nash:1951:NCG} assures that there is at least one mixed Nash equilibrium. However, the state of the economy where no packet participates in trades is a trivial
NE where no trades take place. We have to show that there at least one more NE.
A nice property of the current proof technique (due to Gintis~\cite{Gi97}) is that we can impose conditions on the 
equilibrium point. We can assume a restricted version of the economy, where each packet has a non-empty pure trading strategy set. In a sense each packet is enforced to accept at least some types of profitable trades every time it is possible. 

In the restricted economy each round is again a finite game and, consequently, it has a mixed NE. This time the NE has trades assuming that packets with different utility functions exist in the population. Assume now a NE state of the restricted game in the original unrestricted economy. 
\hla{It can be shown that,} assuming appropriate utility functions for the packets, 
if we relax the forced-trade restriction, then no packet has an incentive to 
unilaterally change its strategy.
That is, there exists a NE with trades for the original PacketEconomy.
\end{proof}

\subsection{Scenario 2}
\label{sec:ScenarioFiat}
We examine now a scenario where fiat money, that is, money without any intrinsic value~\cite{Mankiw:2008:Economics}, is used. Fiat money is by definition an object that 
is inherently worthless. This means that in the game-theoretic context of the PacketEconomy,
fiat money will not appear in any utility function and will not be axiomatically redeemable as anything else. We define and analyze this scenario to further emphasize the potential of money as a coordination tool.

We assume $N$ players, each with two flows, a business flow and an economy flow.
Let $m$ be a fixed quantity of fiat money available in the economy. Every packet has 
a value function and the objective of a player is to maximize the total utility
per round from its two flows. As implied by its definition, fiat money is not part of the utility
function.
%We assume that the cost for buying (reducing) delay of one round
%is $1$ unit of money (XXXX Can the prices be determined endogenously, in the economy? Probably yes ....)
Each player defines the trading strategies for its flows. Economy flows accept fiat
money to trade their queue position, while business flows spend fiat money to reduce their
delay. The money collected by the economy flow is used by the player to finance the  
business flow.

The two flows of each player act as a team with the common objective to maximize the total
utility obtained by the two flows together. For simplicity we assume window-based flows with 
fixed window size 1. This leads to a team game (a concept first studied in~\cite{Palfrey:1983:Team}) 
where teams of two player-packets each, compete for maximum utility.
The packets of each team collaborate whereas the teams participate in a non-cooperative game. 

An important difficulty of such a team game for the PacketEconomy is the lack of 
common information between the packets of the same team. A business
packet that is ready to spent money cannot be informed about the exact amount 
of money its economy packet team member currently has. 
Analyzing the scenario as a team game would be interesting
but is out of scope for the present work. 
Instead, we will simplify and formulate this setting as a normal strategic form game. 

{\bf The adapted economy.} We define an adapted version of the economy where 
each flow is an independent player and for each of the original flow pairs there
is a common deposit located at the common source of the two flows. Every time a packet is served
and before its successor is submitted, it interacts with the deposit of the pair and liquidates or buys fiat money. There is a bound $s_d$ on the amount of fiat money that can be stored in the deposit.
An economy packet liquidates and a business flow buys as much money as possible. The value of fiat money for the transactions of the team members with 
the deposit is fixed at $(c_e + c_b)/2$, where $c_e$ and $c_b$ are the costs per unit
of delay for the two packets. This assumption restricts the solution space of the game
but admits us to analyze it as a non-cooperative game. Note that if it were analyzed as a team game, 
as long as the price ratio $\rho_e / \rho_b$ would satisfy $\rho_e / \rho_b < c_e/c_b$, the flow pair would gain from each trade. 

Packet failures that increase the packet delay can occur just like in Scenario 1. 
However, in this scenario when both packets of a team are in a failed state in the same round, 
then we assume that a team failure has occurred and the team is reset. The deposit is set to zero.
Since fiat money must be preserved in the economy (else fiat money would disappear from
the economy after a finite number of rounds), the fiat money of the team is handed
over to the router who distributes it randomly in the next round to the packets in the
router queue (taking care not to violate maximum budget bounds of any packet).

Under plausible assumptions and similarly to Theorem~\ref{the:scenario1} we get:
\begin{theorem}
\label{the:scenario2}
A NE of the adapted economy exists where packets trade fiat money.
\end{theorem}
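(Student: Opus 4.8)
The plan is to mirror the proof of Theorem~\ref{the:scenario1} as closely as possible, since the author explicitly invites this (``similarly to Theorem~\ref{the:scenario1}''). The overall strategy has three layers: first establish that the adapted economy is an ergodic Markov chain with a unique stationary distribution under any fixed (pure or mixed) strategy profile; then argue that, given the stationary distribution, each trading period reduces to a finite strategic game; and finally invoke Nash's theorem to get an equilibrium, while using Gintis's restriction trick to rule out the trivial no-trade equilibrium and force a genuine trading equilibrium to exist. The new features relative to Scenario~1 are the fiat-money deposits attached to each flow pair, the fixed exchange rate $(c_e+c_b)/2$ between fiat money and value, the team-failure/reset mechanism, and the router's redistribution of the reset team's fiat money.

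First I would re-derive ergodicity. The key obstacle here is that the state space has been enlarged: besides each packet's inventory and failure counter, the state now tracks the fiat-money deposit of each flow pair (bounded by $s_d$) and the total fiat money $m$ in circulation. I would check that the state space is still finite --- this follows because budgets, delays, deposits, and $m$ are all bounded by the stated integer constants. Then I would reproduce the zero-state argument: I exhibit a positive-probability event (a cascade of packet failures as in Lemma~\ref{lem:Markov}, now combined with a team failure that resets a deposit to zero and triggers redistribution) that drives the economy into a canonical recurrent ``zero state'' from any starting configuration. The redistribution step needs a small check that the router can always place the recovered fiat money without violating the per-packet budget bounds $s_b$; the phrase ``taking care not to violate maximum budget bounds'' in the setup is exactly what licenses this. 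Once a canonical recurrent state is reached with positive probability, finiteness plus aperiodicity gives ergodicity via the Fundamental Theorem of Markov Chains, and hence a unique stationary distribution $\pi_\sigma$ for each pure profile $\sigma$, exactly as in Lemma~\ref{lem:sce1pi}.

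Next I would set up the finite game in analogy with Lemma~\ref{lem:FiniteGame}. The crucial modeling move, already made by the author, is to replace the genuine two-packet team by treating each flow as an independent player whose transactions with the deposit use the \emph{fixed} exchange rate $(c_e+c_b)/2$; this is what converts the problematic team game (with its lack of common information between business and economy packets) into an ordinary non-cooperative normal-form game. Here the payoff of each player-packet is its expected utility under the stationary distribution $\pi_{\sigma_m}$ induced by the mixed profile, and since fiat money is inherently worthless it enters no utility function directly --- its only effect is indirect, through the positions and delays that trading fiat money secures. With finitely many players, each having a finite strategy set (an ordering of finitely many inventory states), Nash's theorem yields at least one mixed equilibrium.

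The hard part will be the final step: showing that a \emph{nontrivial} equilibrium with actual fiat-money trades exists, rather than only the degenerate no-trade equilibrium (which is always present because worthless money gives no packet a unilateral reason to initiate a trade). I would import Gintis's device used in Theorem~\ref{the:scenario1}: pass to a restricted economy in which every packet is compelled to accept at least some class of profitable trades whenever available, so each restricted trading period is a finite game with a mixed equilibrium that \emph{necessarily} involves trades --- the business/economy asymmetry in the costs $c_e,c_b$ guarantees positive surplus at the fixed rate, since $\rho_e/\rho_b < c_e/c_b$ makes each trade beneficial to the pair. Then I would argue that this restricted equilibrium remains an equilibrium once the forced-trade restriction is relaxed: under appropriately chosen value functions no player gains by unilaterally abandoning the fiat trades, because a business packet's expected utility in the stationary distribution strictly improves through the delay reductions financed by its economy counterpart's deposits, and symmetrically the economy packet is compensated. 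I expect the delicate point to be precisely this relaxation argument, since one must verify that the indirect, stationary-distribution-mediated value of holding and spending intrinsically worthless fiat money is enough to deter unilateral deviation --- this is where the ``plausible assumptions'' and ``appropriate utility functions'' hedges do the real work, just as in the proof of Theorem~\ref{the:scenario1}.
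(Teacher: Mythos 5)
Your proposal follows essentially the same route as the paper's own (sketched) proof: model the adapted economy as a finite ergodic Markov chain whose state includes the deposits, obtain a unique stationary distribution per strategy profile, reduce each trading period at stationarity to a finite normal-form game via the fixed exchange rate $(c_e+c_b)/2$, and then invoke Nash's theorem together with the Gintis forced-trade restriction (as in Theorem~\ref{the:scenario1}) to obtain a non-trivial equilibrium where fiat money circulates. In fact your write-up supplies more detail than the paper's four-bullet sketch (e.g., the adapted zero-state argument and the budget-bound check during redistribution), while remaining faithful to its structure.
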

\begin{proof}[Sketch]
\begin{itemize}
\item The adapted economy can be modeled as a finite state Markov chain. The proof is similar to the proof of Lemma~\ref{lem:Markov}. The main difference is that the state of the economy comprises also the state of the common deposits of the flow pairs.
\item For each mixed strategy of the economy, there is a corresponding stationary distribution of the Markov chain of the economy. The proof is similar to the proof of Lemma~\ref{lem:sce1pi}. 
\item At the stationary distribution, each trading period corresponds to a finite strategic game.
The proof is similar but more involved than the proof of Lemma~\ref{lem:FiniteGame}.
\hla{First, we have to show that at the stationary state, each trading period corresponds to the 
normal form game $G$ defined above. Then, we have to show that a NE of the normal form game 
corresponds to a NE of a restriction of the original game, where the value of fiat money
is fixed within each flow pair.}
\item There is an equilibrium of the adapted economy, where fiat money circulates.
\end{itemize}
\end{proof}

Theorem~\ref{the:scenario2} \hla{shows that there is a NE where trades take place and fiat money 
circulates in the adapted economy. Clearly, the equilibrium state of the adapted economy is 
a valid state of the original economy which is based on the team-game with the flow pairs. 
However, it is open to find out if the NE of the adapted economy is a NE of the original economy too.
Moreover, it is possible that the free exchange rate for fiat money in the original economy may lead to more efficient (better wealth rate) equilibrium points and/or NE than the adapted economy.}

\subsection{Auxiliary Tools}

\noindent
{\bf Money as a Coordination Tool.}
The NE of Scenarios 1 and 2 show that, in principle, money can be used at a microeconomic 
level to coordinate network packets. By definition, the flows of the scenarios can only benefit through the use of money; each trade is a weak Pareto improvement for the current state of the economy.
However, the benefit for the individual flows and the overall network has to
be further studied. We present some preliminary results on the underlying single 
machine scheduling problem in Section~\ref{sec:sched} and the effect of trades in Section~\ref{sec:effect}.

\noindent
{\bf Packet Pairing.}
A core operation of the PacketEconomy is the random pairing of the packets 
that takes place in each trading period to generate the trading pairs.
We present an efficient parallel algorithm that can support the random pairing 
procedure in real time. 

The new algorithm is a parallel, or better, a pipelined version of 
the random shuffling algorithm of 
Fisher-Yates, which is also known as Knuth shuffling~\cite{wiki:2011:shuffling,Knuth:1981:Vol2}.
The Fisher-Yates shuffling technique was introduced in~\cite{Fisher:1948:Shuffle},
later Durstenfeld~\cite{Durstenfeld:1964:Shuffle} proposed a corresponding
$O(n)$ algorithm, and finally Knuth~\cite{Knuth:1981:Vol2} popularized
Durstenfeld's algorithm.
The random shuffling of Fisher-Yates is a simple and elegant way to generate
a random shuffle with a single pass over an array of items.
We call the new algorithm \emph{Pipelined Shuffling} (see Algorithm 1). Its core
is a pipeline of $q$ instances $0,1,\dots,q-1$ of the Fisher-Yates algorithm 
At time $t$, instance $k$ is at step $t+k \bmod q$ of the random shuffling algorithm.
If there is a processing unit for each queue position then the parallel step
can be executed in $O(1)$ time. The pipeline delivers a random permutation
in each trading period. The random permutation can be used to obtain a 
random pairing in $O(1)$ parallel time with $q$ processors.

\begin{localscope}
\begin{algorithm}
\caption{Shuffling}
\label{alg:shuffle}
\begin{algorithmic}[1]
\Procedure {Shuffle}{int[] a}
\cw{2cm}

len = a.length;

for i from 0 to a-1 do \{ 

\hspace{0,5cm} \textit{// i==a-1 is a dummy loop}

\hspace{0,5cm}      j = random int in $i \leq j \leq q-1$; 

\hspace{0,5cm}       exchange a[j] and a[i]

\} 

\EndProcedure

\end{algorithmic}

\begin{algorithmic}[1]
\Procedure {PipelinedShuffling}{int[][] A}
\cw{2cm}

len = A.length;

for i from 0 to len-1 do in parallel \{ 

\hspace{0,5cm}      processor i: wait for i periods;

\hspace{0,5cm}      processor i: while (true) \{Shuffle(A[i]);\} 

\}
\EndProcedure
\end{algorithmic}
\end{algorithm}
\end{localscope}

\begin{lemma}
\label{lem:shuffle}
The Pipelined Shuffling algorithm runs in $O(1)$ parallel time on a $q$ processors EREW PRAM
and delivers a random shuffle every $O(1)$ steps.
\end{lemma}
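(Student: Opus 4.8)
The plan is to separate the two claims in the lemma --- the $O(1)$ cost of each parallel step on an EREW PRAM, and the $O(1)$ throughput (one fresh shuffle every constant number of steps) --- and to reduce the correctness of each delivered permutation to the standard analysis of the sequential Fisher--Yates/Knuth shuffle.

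First I would recall the correctness of a single instance. The inner procedure \textsc{Shuffle} is exactly the Durstenfeld form of Fisher--Yates: for $i = 0, \dots, q-1$ it swaps $a[i]$ with $a[j]$ for a uniformly random $j \in \{i, \dots, q-1\}$. A standard induction (see~\cite{Knuth:1981:Vol2}) shows that after the $q$ steps the array holds a permutation drawn uniformly from all $q!$ permutations, and crucially this conclusion holds for \emph{any} starting arrangement of the array. Hence when \textsc{Shuffle}$(A[k])$ is invoked repeatedly on the carried-over array, each completed pass still outputs a uniformly random permutation, so whichever instance finishes at a given time delivers a genuine random shuffle.

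Next I would handle the timing and the memory model. Each parallel step advances every instance $k$ by exactly one iteration of its Fisher--Yates loop, so instance $k$ sits at step $(t+k) \bmod q$ at time $t$, as asserted in the description. A single loop iteration draws one random index and performs one swap on that instance's own array $A[k]$, which is $O(1)$ work. Because the $q$ instances operate on the disjoint arrays $A[0], \dots, A[q-1]$ and each uses its private source of randomness, no two processors ever read or write the same cell in the same step; the step is therefore conflict-free and executes in $O(1)$ time on an EREW PRAM with $q$ processors.

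Finally I would establish the throughput. An instance completes one full pass precisely when it passes its last step, i.e. once every $q$ steps. Since the instances are staggered so that the phases $(t+k) \bmod q$ for $k = 0, \dots, q-1$ form a complete residue system modulo $q$, at every time $t \ge q$ exactly one instance is at its completing step; thus, after a one-time ramp-up of $q-1$ steps (a fixed additive cost that is amortized away), one fresh uniform permutation is emitted at every step, which is a shuffle every $O(1)$ steps. Reading the resulting permutation in consecutive pairs then yields a random pairing, again in $O(1)$ parallel time. I expect the only delicate points to be the EREW justification --- arguing that staggering independent instances over disjoint arrays genuinely avoids concurrent access --- together with the observation that the uniformity of Fisher--Yates is independent of the input order, so repeatedly reshuffling the same array does not degrade the distribution of the delivered permutations.
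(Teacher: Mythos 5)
Your proof is correct and follows essentially the same approach as the paper's: $q$ staggered, independent Fisher--Yates instances operating on disjoint arrays, giving conflict-free $O(1)$ work per processor per step (hence EREW) and one completed permutation per step after the pipeline fills. You are in fact more careful than the paper on two points it leaves implicit --- that Fisher--Yates uniformity holds for any starting arrangement (so re-shuffling carried-over arrays is harmless) and the $q-1$ step ramp-up --- but these are refinements of the same argument, not a different route.
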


\begin{proof}
\hla{A running instance of the Pipelined Shuffling algorithm contains $q$ independent instances
of the basic Shuffle algorithm. 
Each Shuffle instance is executed by one of the $q$ processors.
From the pseudocode of the algorithms we can conclude that
each instance of the Shuffle algorithm is at a different round of its main loop.
Moreover, each instance of the Shuffle 
algorithm has its own vector of $q$ memory positions
to store its current permutation and, thus, there is no possibility of two processors 
concurrently accessing the same memory position. In each round, one Shuffle instance completes its execution and delivers a random permutation of the $q$ numbers $\{1, 2, \dots, q \}$.}
\end{proof}

\noindent
The PacketEconomy packet pairing algorithm uses the delivered random permutation 
to generate a random pairing in $O(1)$ parallel time on $\lceil q/2 \rceil$ processors.

\begin{theorem}
\label{the:pairing}
A random packet pairing can be generated every $O(1)$ parallel time on a $q$ processors EREW PRAM.
\end{theorem}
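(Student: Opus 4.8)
The plan is to assemble the pairing in two stages. First I would run the Pipelined Shuffling algorithm of Lemma~\ref{lem:shuffle} to obtain a uniformly random permutation $\pi = (\pi(0), \pi(1), \dots, \pi(q-1))$ of the $q$ queue positions; by that lemma this costs $O(1)$ amortized parallel time on $q$ processors, since the pipeline emits one fresh permutation per round in steady state. Second, I would convert $\pi$ into a pairing by grouping consecutive entries, emitting the set of pairs $\{(\pi(0),\pi(1)), (\pi(2),\pi(3)), \dots\}$. Concretely, processor $i$ reads the two array cells at indices $2i$ and $2i+1$, forms the pair $(\pi(2i),\pi(2i+1))$, and writes it to its own output slot. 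This conversion uses $\lceil q/2 \rceil$ processors and a single parallel step.

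For the computational claim I would observe that each processor $i$ touches exactly the two input cells $2i$ and $2i+1$ and a private output cell, so the index ranges are pairwise disjoint and no cell is ever read or written by two processors at once; hence the conversion is a legal EREW step running in $O(1)$ time on $\lceil q/2 \rceil \le q$ processors. Composing this with the $O(1)$ amortized per-round cost of producing a permutation from the pipeline, the whole procedure delivers one random pairing every $O(1)$ parallel steps on $q$ processors, which is exactly the bound the theorem asserts.

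The real substance, and the step I expect to be the main obstacle, is the distributional claim that consecutive grouping of a \emph{uniform} permutation yields a \emph{uniform} pairing. I would settle this with a counting bijection: for even $q$ the number of pairings is $(q-1)!! = q!/(2^{q/2}(q/2)!)$, and each fixed pairing is the image of exactly $(q/2)!\,2^{q/2}$ permutations, counting the $(q/2)!$ orderings of the blocks and the $2^{q/2}$ internal swaps within blocks. Since every permutation is equally likely and the grouping map is uniformly $(q/2)!\,2^{q/2}$-to-one, every pairing is equally likely, matching the well-mixed-population assumption stated earlier. The one subtlety to state carefully is the odd-$q$ case: the last entry $\pi(q-1)$ is then left unmatched (the packet sitting out the trading period), and the same counting argument applied to the remaining $q-1$ entries shows the induced pairing is uniform conditioned on that unmatched element, which is itself uniform over the positions. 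The complexity bookkeeping and the EREW access pattern are routine once Lemma~\ref{lem:shuffle} is in hand; making the $(q/2)!\,2^{q/2}$-to-one count precise so that uniformity transfers cleanly is the only part deserving genuine care.
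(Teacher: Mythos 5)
Your proposal is correct and follows essentially the same route as the paper's proof: obtain a fresh random permutation from the Pipelined Shuffling pipeline of Lemma~\ref{lem:shuffle}, then have $\lceil q/2 \rceil$ processors pair the consecutive entries at positions $2i$ and $2i+1$ in a single EREW parallel step, leaving one position unmatched when $q$ is odd. The only difference is that you also prove the distributional claim---that the grouping map is uniformly $(q/2)!\,2^{q/2}$-to-one, so the induced pairing is uniform---a point the paper's proof leaves implicit.
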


\begin{proof}
\hla{
From Lemma~\ref{lem:shuffle} we know that a random permutation can delivered 
with Pipelined Shuffling every $O(1)$ parallel steps.
The algorithm to generate a random pairing from it
requires $\lceil q/2 \rceil$ parallel processors and works as follows.
A separate vector of $q$ memory positions is used to store the final pairing.
Each processor $i$ of the involved processors reads the values $x_{2i}$ and $x_{2i+1}$
of the positions $2i$ and $2i+1$ of the permutation, respectively, and then writes into position 
$2i$ of the pairing vector the value $x_{2i+1}$ and into position 
$2i+1$ the value $x_{2i}$. If $q$ is an odd number, then one position will not be paired. 
The contents of the final vector specify for each position the corresponding paired position.}
\end{proof}

\noindent
{\bf Computational Requirements of the PacketEconomy.}
A very important advantage of the PacketEconomy is that the computational
requirements of the flows and the router are tolerable.
In contrast to solutions that implement global auctions between the flows,
the PacketEconomy is based on simple interactions between paired packets.
Assuming that the price for each packet trade can be calculated in $O(1)$ time, 
in each trading period the work is $O(1)$ for each packet 
that belongs to a packet pair, in total $O(q)$.
However, all operations are local to the pairs of queue positions.
With appropriate, fairly simple multi-core hardware, each round can be 
executed in $O(1)$ parallel time. Further arguments which support the 
applicability of our approach to modern routers are presented in 
Section~\ref{sec:hw}.

\section{The Scheduling Problem}
\label{sec:sched}

The underlying algorithmic problem of the PacketEconomy is a scheduling problem of network packets.
From the router's point of view, this problem is a single machine scheduling problem with a max 
weighted total wealth objective.

\begin{definition} Max-Total-Wealth Scheduling (MTW).
$N$ jobs $j_i$, for $i=1,\dots,n$. Job $j_i$ has processing time $p_i$, release date $r_i$,
deadline $d_i$ and weight $w_i$. Let $c_i$ be the completion time of job $i$ in a schedule.
The objective is to find a non-preemptive schedule that maximizes the total wealth $W = \sum_i w_i \cdot \max(d_i - c_i, 0)$.
\end{definition}

\noindent
The release date $r_i$ is the time when packet $i$ enters the queue
and the deadline $d_i$ is the time when the value of the packet becomes zero.
Note that the price used for each trade does not directly influence the sum $W$ (the total wealth).
For MTW on a network router the following assumptions hold:
\begin{enumerate}
\item The queue discipline is work-preserving, meaning the router is never left idle if the queue is not empty.
\item The number of packets that can be in the queue at any time is bounded by a constant (the maximum queue size). 
\item The packet sizes may differ by at most a constant factor. In this work we assume that all packets are of the same size.
\end{enumerate}

\noindent
The complexity of the scheduling problem strongly depends on the assumptions made. 
Without deadlines, i.e., without a limit on the delay of each packet, 
an optimal schedule can be obtained by applying a greedy rule like Smith's rule~\cite{Smith:1956:Rule}. In particular, the router may simply serve in each 
round the packet with the largest cost factor $c_i$. 
\begin{theorem}
The MTW problem without deadlines can be optimally solved in polynomial time.
\end{theorem}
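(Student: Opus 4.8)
The plan is to recast the deadline-free version of MTW as a classical single-machine problem and then settle it with an exchange argument.

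First I would use the hypothesis ``without deadlines'' to remove the truncation in the objective. When no packet has a finite deadline, the value of a packet decreases linearly in its delay and is never floored at zero, so for every packet $i$ we have $\max(d_i-c_i,0)=d_i-c_i$. Writing the objective out gives $W=\sum_i w_i(d_i-c_i)=\sum_i w_i d_i-\sum_i w_i c_i$. The first sum $\sum_i w_i d_i$ is independent of the chosen schedule, so maximizing $W$ is \emph{equivalent} to minimizing the weighted sum of completion times $\sum_i w_i c_i$. Here the weight $w_i$ is precisely the cost factor (cost per unit of delay) referred to in the statement. This step turns the problem into the classical single-machine objective $\sum_i w_i c_i$, which under assumption~(3) (all packets of equal size, hence equal processing time) together with the work-preserving assumption~(1) is solved greedily.

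Second, I would establish optimality of the greedy ``serve the largest cost factor'' rule by a pairwise-interchange (exchange) argument, i.e.\ Smith's rule~\cite{Smith:1956:Rule}. Consider any schedule in which two consecutively served packets appear with the smaller-weight packet first while both are already released. Swapping the two packets leaves all other completion times unchanged and alters $\sum_i w_i c_i$ by a quantity whose sign always favours serving the heavier packet first (for equal processing times this reduces to comparing the two weights). Repeatedly applying such interchanges transforms any schedule into the one that, at every round, serves the available packet of maximum weight, without ever increasing $\sum_i w_i c_i$; hence the greedy rule is optimal. Since at most $N$ packets are served and each selection of a maximum-weight available packet costs $O(\log N)$ with a priority queue, an optimal schedule is produced in $O(N\log N)$ time, which is polynomial.

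The step I expect to be the main obstacle is the presence of release dates $r_i$: with arbitrary release dates the weighted-completion-time problem $1\,|\,r_i\,|\,\sum w_i C_i$ is NP-hard, so the reduction alone does not finish the proof. The assumption that all packets have equal size is exactly what rescues polynomiality, because it makes the greedy ``maximum available weight'' rule optimal even when packets arrive over time. I would therefore take care, in the interchange argument, to perform swaps only between packets that are both already released at the earlier of the two service rounds, so that every intermediate schedule remains feasible; the equal-processing-time assumption guarantees that such feasibility-preserving interchanges suffice to reach the greedy order, and hence that the greedy rule attains the optimum.
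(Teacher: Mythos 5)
Your proposal is correct and takes essentially the same route as the paper, which proves the theorem simply by invoking Smith's rule: serve in each round the queued packet with the largest cost factor, the reduction from maximizing $W$ to minimizing $\sum_i w_i c_i$ being left implicit. Your write-up is in fact more careful than the paper's, since you make the constant-term reduction explicit, give the pairwise-interchange argument, and flag the release-date issue (NP-hardness of minimizing weighted completion time with general release dates) that the equal-packet-size assumption neutralizes, none of which the paper spells out.
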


\noindent
This holds even for the online version of the problem where the router knows only
the packets in its queue; the greedy rule gives a $1$-competitive algorithm.
\begin{theorem}
There is $1$-competitive algorithm for MTW without deadlines.
\end{theorem}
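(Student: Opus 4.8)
The plan is to reduce the online problem to the offline one by showing that the natural greedy rule is not merely a heuristic but is in fact optimal, so that its competitive ratio against the offline optimum is exactly $1$. First I would rewrite the objective. With no deadlines the clamping in $\max(d_i-c_i,0)$ never takes effect, so the wealth of any feasible schedule is
\[
W = \sum_i w_i (d_i - c_i) = \sum_i w_i d_i - \sum_i w_i c_i .
\]
The term $\sum_i w_i d_i$ is a constant that does not depend on the schedule, hence maximizing $W$ is equivalent to minimizing the weighted sum of completion times $\sum_i w_i c_i$, the classical single-machine objective for which Smith's rule~\cite{Smith:1956:Rule} is optimal.

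Next I would exploit the two structural assumptions of the router model: the machine is work-preserving and all packets have equal processing time and are served one per round. Under these assumptions the multiset of \emph{service slots} --- the rounds in which some packet is completed --- is identical for every work-preserving schedule, because the number of packets in the queue at each round depends only on the exogenous arrival (release) pattern and not on which packet is chosen for service. Consequently scheduling reduces to an assignment problem: assign each job to one of the fixed service slots $t_1 \le t_2 \le \cdots$, subject to the feasibility constraint that job $i$ may occupy only slots $t \ge r_i$, so as to minimize $\sum_i w_i c_i$. The greedy rule --- serve, in each round, the released and not-yet-served packet of largest weight currently in the queue --- is precisely the greedy assignment that fills the slots in increasing time order with the heaviest available job.

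To prove optimality of greedy I would use an exchange argument. Consider any optimal assignment and the earliest service slot $t$ at which it disagrees with greedy. Greedy places at $t$ some job $a$ of maximal available weight; since the two schedules agree before $t$, job $a$ is also released and unserved in the optimal schedule at $t$, so the optimal schedule could have placed it there, yet instead placed some $b$ with $w_b \le w_a$ at $t$ and $a$ at a later slot $t' > t$. Swapping $a$ and $b$ preserves feasibility, because $r_a \le t$ (as $a$ is available at $t$) and $r_b \le t < t'$ (as $b$ was originally placed at $t$), and it changes the objective by $(t'-t)(w_b - w_a) \le 0$. Thus the swap does not increase $\sum_i w_i c_i$; iterating it transforms the optimal schedule into greedy without loss, so greedy is offline-optimal.

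Finally I would observe that the greedy decision at each round depends only on the set of packets currently in the queue and never on future arrivals, so greedy is a legitimate online algorithm; since it attains the offline optimum, it is $1$-competitive. The main obstacle I anticipate is not the exchange argument itself but justifying cleanly that the service-slot multiset is policy-independent and that every job is eventually served (which follows from finiteness of the instance, work-preservation, and the absence of deadlines); once that invariant is established, the assignment reformulation and the swap are routine.
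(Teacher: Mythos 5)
Your proof is correct and follows the same route the paper intends: the theorem is stated there with only the preceding remark that the greedy rule (serve, in each round, the queued packet with the largest cost factor, i.e., Smith's rule) is optimal and remains so in the online setting, which is exactly your algorithm. The paper supplies no actual proof, so your three steps --- rewriting the objective as minimizing $\sum_i w_i c_i$ since $\sum_i w_i d_i$ is schedule-independent, observing that equal packet sizes plus work preservation make the multiset of service slots policy-independent (the point that saves the problem from the NP-hardness of general $1|r_j|\sum w_j C_j$), and the weight-swap exchange argument --- are precisely the details the paper leaves implicit, and you execute them correctly.
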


\noindent
However, in realistic scenarios with IP packets, there are deadlines. Common IP packets 
have a time-to-live (TTL) field. In TCP, a packet that is not acknowledged within the specified
timeout period is considered lost. The scheduling problem for packets with deadlines
can be solved off-line as a linear assignment problem (LAP), where packets are assigned to time-slots 
(rounds).
This approach is used in~\cite{GLLR79} for a min-weighted-tardiness problem 
that is related to the MTW problem.

\begin{theorem}
The MTW problem with deadlines can be optimally solved in polynomial time.
\end{theorem}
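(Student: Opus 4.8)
The plan is to exploit the fact that, under the stated router assumptions, every packet is a unit-size job served in exactly one integer round, and thereby to reduce the scheduling problem to a maximum-weight bipartite matching, i.e.\ a linear assignment problem (LAP), which is solvable in polynomial time by the Hungarian algorithm or by min-cost flow. This mirrors the technique used in~\cite{GLLR79} for the related min-weighted-tardiness problem.

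First I would fix the combinatorial structure. Because all packets have the same (unit) processing time and the router serves exactly one packet per round, a non-preemptive schedule is nothing more than an assignment of the $n$ jobs to distinct integer time-slots (rounds). If job $i$ is placed in slot $t$ then its completion time is $c_i = t$, so its contribution to the objective, $w_i \cdot \max(d_i - t, 0)$, depends only on the slot it receives; the sole feasibility constraint is the release date, namely that job $i$ may occupy slot $t$ only if $t \ge r_i$. Next I would build the bipartite graph $G$ with the $n$ jobs on one side and the candidate slots on the other, putting on edge $(i,t)$ the weight
\[
w_{i,t} = \begin{cases} w_i \cdot \max(d_i - t,\, 0) & \text{if } t \ge r_i,\\ -\infty & \text{otherwise,} \end{cases}
\]
equivalently omitting the edge when $t < r_i$, and adding one dummy zero-weight slot per job so that every job can always be matched (a job matched to a dummy slot is in effect served after its deadline and contributes nothing). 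A maximum-weight matching saturating all jobs then corresponds exactly to a feasible schedule whose total wealth $W$ equals the matching weight, and conversely; hence an optimal matching yields an optimal schedule.

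The main obstacle is keeping the LAP instance polynomial, i.e.\ bounding the number of candidate slots. The key observation is that only a polynomial set of slots can ever be useful: a slot lying beyond every remaining deadline contributes nothing, and in a work-preserving schedule (the first router assumption) the busy slots form consecutive blocks whose first slot is forced by a freshly released job, so every useful completion time lies in the set $\{\, r_i + j : 1 \le i \le n,\ 0 \le j \le n \,\}$, of size $O(n^2)$. Restricting the slot side of $G$ to this set (plus the per-job dummy slot) leaves an instance with $n$ jobs and $O(n^2)$ slots, on which maximum-weight matching runs in time polynomial in $n$.

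A final point I would verify is that every matching produced is realizable as a valid schedule: since the matched slots are distinct integers each satisfying $t \ge r_i$, one simply serves in each occupied slot the job matched to it, giving a legitimate non-preemptive unit-job schedule. Any residual idle gaps can be removed by left-shifting, which only advances completion times and therefore never decreases $W$, so the work-preserving requirement is met without loss of optimality. Combining the exact correspondence between feasible schedules and saturating matchings with the $O(n^2)$ slot bound and the polynomial running time of the assignment algorithm establishes that MTW with deadlines is solvable in polynomial time.
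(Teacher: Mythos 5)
Your proposal is correct and follows essentially the same route as the paper, which proves this theorem by casting the problem as a linear assignment problem (LAP) in which packets are assigned to time-slots (rounds), citing the analogous construction in~\cite{GLLR79} for min-weighted-tardiness. Your write-up simply makes explicit the details the paper leaves implicit: the edge weights $w_i \cdot \max(d_i - t, 0)$, the dummy slots for late jobs, and the $O(n^2)$ bound on the number of candidate slots needed to keep the assignment instance polynomial.
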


\subsection{Bandwidth Sharing}
Due to the on-line nature and the finite queue size of the PacketEconomy router, 
the above classic scheduling algorithms do not seem to naturally fit the 
MTW scheduling problem of the PacketEconomy queue. An additional reason is that
for window-based flows, packet transmission is a closed loop. 
Consequently, the order in which the queued packets are served influences, if not determines,
the next packet that will enter the queue. Thus, even the online assumption may not be appropriate.
\hla{A different approach to study the scheduling problem of the PacketEconomy is to 
focus on how the bandwidth is shared between the packets.} In this case we consider the 
(average) packet rate of the flows, as shown in the following example.

\begin{example}
\label{exa:rates}
Assume a scenario with window-based flows and 5 economy packets and 5 business packets.
There is a deadline of 40 rounds on the maximum delay of the economy packets.
Moreover, the business packets have to be treated equally. The same holds for 
the economy packets. Consider the scenario where each economy packet will be served
with a rate of 1/40 packets/round and delay of 40 rounds and the business flows share 
the remaining bandwidth; each business packet is served at a rate of 7/40 packets/round
and delay 40/7 rounds. This is an upper bound on the rate of total wealth for the router
for this scenario. 
\end{example}

\section{The Effect of Trades}
\label{sec:effect}

In the PacketEconomy, each packet can increase its utility by making trades. 
To show the potential
of the approach, consider a packet of maximum priority that pays enough to 
make any trade that reduces its delay.
In the analysis we will assume that the probability of packet failures is very low, 
and thus ignore it. We will focus on window-based flows.
We first present an exact calculation for the average delay of this packet 
and then derive simpler, approximate bounds on the delay.

\begin{lemma}
\label{lem:delayAVG}
The average delay $E[d]$ of the packet is 
\begin{equation}
E[d] = \sum_{s=0}^q s \left(\frac{1}{q-2}\right)^s (s-1) \frac{(q-2)!}{(q-s-1)!} \; . 
\end{equation}
\end{lemma}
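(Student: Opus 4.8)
The plan is to reduce the lemma to computing the probability mass function of the delay $d$: once we show that $\Pr[d=s]=(s-1)\,\frac{(q-2)!}{(q-s-1)!}\,(q-2)^{-s}$, the stated identity is nothing more than $E[d]=\sum_{s} s\,\Pr[d=s]$. First I would isolate the single-packet Markov chain. Setting the failure probability to zero, the tagged maximum-priority packet enters at the tail, position $q-1$; the queue stays full, the packet at position $0$ is served each round, and our packet is paired uniformly with one of the remaining $q-2$ waiting packets. If it sits at position $k\ge 1$ and its partner at position $j$, then, since it pays for every delay-reducing trade, its next-round position is $\min(j,k)-1$: it buys the partner's slot when $j<k$ and merely shifts forward when $j>k$. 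This gives the transition law $\Pr[k\to m]=\frac{1}{q-2}$ for each $m\in\{0,\dots,k-2\}$ together with $\Pr[k\to k-1]=\frac{q-1-k}{q-2}$. The crucial structural observation is that the position strictly decreases every round, so the packet traces a strictly decreasing path $q-1=a_0>a_1>\cdots>a_{d-1}=0$, and its delay $d$ equals the number of terms on this path. It therefore suffices to find the distribution of the path length.

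Next I would compute $\Pr[d=s]$ by summing path probabilities. A given path contributes $\prod_{i}\Pr[a_i\to a_{i+1}]$; factoring the common $\frac{1}{q-2}$ out of its $s-1$ steps, a jump step ($a_{i+1}\le a_i-2$) carries weight $1$ while a shift step ($a_{i+1}=a_i-1$) carries the extra weight $q-1-a_i$, so $\Pr[\mathrm{path}]=(q-2)^{-(s-1)}\prod_{\mathrm{shifts}}(q-1-a_i)$. Note that the first step out of $q-1$ is forced to be a jump, since its shift weight $q-1-(q-1)$ vanishes; this is precisely why the extreme value $s=q$ has probability zero and the support is $2\le s\le q-1$. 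Summing over all strictly decreasing $s$-term paths then reduces the claim to the weighted path-count identity $\sum_{\mathrm{paths}}\prod_{\mathrm{shifts}}(q-1-a_i)=(s-1)\,\frac{(q-3)!}{(q-s-1)!}$, which is equivalent to the asserted closed form after restoring the factor $(q-2)^{-(s-1)}$.

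The main obstacle is this last identity, because the shift weights depend on the position and so distinct paths are not equiprobable. I would prove it by induction via the two-parameter quantity $G(k,\ell)=\sum\prod_{\mathrm{shifts}}(q-1-a_i)$, the sum running over $\ell$-term strictly decreasing paths from $k$ down to $0$. Conditioning on the first step yields the recursion $G(k,\ell)=\sum_{m=0}^{k-2}G(m,\ell-1)+(q-1-k)\,G(k-1,\ell-1)$ with base case $G(0,1)=1$, and an induction on $\ell$ evaluates $G(q-1,s)=(s-1)\,\frac{(q-3)!}{(q-s-1)!}$, delivering the pmf. Carrying out this induction cleanly — in particular keeping track of how the $(q-1-k)$ shift factors telescope into the falling factorial $\frac{(q-3)!}{(q-s-1)!}$ — is the one genuinely delicate computation.

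Having obtained the pmf I would record the consistency checks: the $s=0$ and $s=1$ terms vanish through the factors $s$ and $(s-1)$, the $s\ge q$ terms vanish through $1/(q-s-1)!$, and $\sum_s \Pr[d=s]=1$. Finally I would assemble $E[d]=\sum_{s=0}^{q} s\,(s-1)\,\frac{(q-2)!}{(q-s-1)!}\,(q-2)^{-s}$, which is exactly the statement. As a safeguard against sign or off-by-one errors in the induction I would validate the formula against fully enumerated small cases $q=3,4,5$; for instance at $q=5$ direct path enumeration gives $\Pr[d=2]=1/3$, $\Pr[d=3]=4/9$, and $\Pr[d=4]=2/9$, in agreement with the closed form.
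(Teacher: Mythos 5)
Your proposal is correct, and it reaches the paper's probability mass function $\Pr[d=s]=(s-1)\bigl(\tfrac{1}{q-2}\bigr)^{s}\tfrac{(q-2)!}{(q-s-1)!}$ by a genuinely different route. You fix $s$ and sum over strictly decreasing position paths, which forces you to evaluate a weighted path count (paths are not equiprobable, since each shift step carries the position-dependent weight $q-1-a_i$), and you handle that with the recursion for $G(k,\ell)$. This plan does go through, but only after the inductive hypothesis is strengthened from the single value $G(q-1,s)$ to a formula valid for every starting position: $G(k,\ell)=(\ell-1)\tfrac{(q-3)!}{(q-\ell-1)!}$ for all $k\ge\ell$, together with the boundary value $G(\ell-1,\ell)=\tfrac{(q-2)!}{(q-\ell-1)!}$; with that, the step $G(k,\ell)=\sum_{m=0}^{k-2}G(m,\ell-1)+(q-1-k)G(k-1,\ell-1)$ telescopes cleanly, and your enumerated $q=5$ values match the closed form. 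The paper avoids this combinatorics entirely by working with the tail rather than the mass function: the event $\{d>s\}$ occurs exactly when the round-$k$ trading partner lies at position at least $s-k+1$ for $k=1,\dots,s$ (a threshold dropping by one per round), so the tail factors immediately into a product of conditional probabilities, $\Pr[d>s]=\bigl(\tfrac{1}{q-2}\bigr)^{s}\tfrac{(q-2)!}{(q-s-2)!}$, and the pmf follows by differencing $\Pr[d>s-1]-\Pr[d>s]$ --- no path enumeration or induction, because conditioning on the tail event makes the shift weights invisible. The paper's decomposition buys brevity; yours buys extra information: the recursion shows the delay law is the same for every starting position $k\ge s$, and it makes the support $\{2,\dots,q-1\}$ and the vanishing of the all-shift path explicit. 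As a side remark, your differenced form is the correct one --- the paper's intermediate differencing display has a typo (the subtracted term should carry $(q-s-2)!$, not $(q-s-1)!$), though its final pmf, and hence the lemma, agree with yours.
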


\begin{proof}
Let $rand(L, U, s)$ be a uniformly random integer in $\{L, L+1, ..., U\} \backslash \{s\}$
and $pos(p)$ the current position of packet p. Then, the probability $Pr[d > s]$ is
\[ Pr[d > s] = \prod_{k=1}^s Pr[rand(1,q-1, pos(p)) \geq s-k+1]  \Rightarrow \]
\[ Pr[d > s] = \frac{q-s-1}{q-2} \cdot \frac{q-s}{q-2} \cdots \frac{q-2}{q-2} \Rightarrow \]
\[ Pr[d > s] = \left(\frac{1}{q-2}\right)^{s} \cdot \frac{(q-2)!}{(q-s-2)!} \; . \]

\noindent
We can now calculate the probability density function $Pr[d=s]$ of the delay $s$.

\[ Pr[d = s] = Pr[delay \leq s] - Pr[delay \leq s-1] \Rightarrow \]

\[Pr[d = s] = \left(\frac{1}{q-2}\right)^{s-1} \cdot \frac{(q-2)!}{(q-s-1)!} - \left(\frac{1}{q-2}\right)^s \cdot \frac{(q-2)!}{(q-s-1)!}  \Rightarrow \]

\[Pr[d = s] = \left(\frac{1}{q-2}\right)^s \left(s-1\right) \frac{(q-2)!}{(q-s-1)!} . \] 

\noindent
Applying the definition of the expected value completes the proof
\[E[d] = \sum_{s=1}^{q} s \left(\frac{1}{q-2}\right)^s \left(s-1\right) \frac{(q-2)!}{(q-s-1)!} . \]
\end{proof}

\begin{lemma}
\label{lem:delayUB}
The average delay of the packet does not exceed $\frac{-1+2 b+2 \sqrt{2b (q-2)}}{2 b}$. For $b=1$ the bound is $\frac{1}{2} + \sqrt{2(q-2)}$.
\end{lemma}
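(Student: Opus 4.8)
The plan is to proceed by the method of moments, building on the exact distribution of Lemma~\ref{lem:delayAVG}. First I would rewrite the survival function in the product form $\Pr[d>s]=\prod_{i=0}^{s-1}\left(1-\frac{i}{q-2}\right)$ for $b=1$, which is exactly the expression obtained in the proof of Lemma~\ref{lem:delayAVG} after cancelling the factorials, and use the elementary identity $E[d]=\sum_{s\ge 0}\Pr[d>s]$. The point is that this product obeys the one-step recurrence $\Pr[d>s+1]=\Pr[d>s]\left(1-\frac{s}{q-2}\right)$, so that $s\,\Pr[d>s]=(q-2)\left(\Pr[d>s]-\Pr[d>s+1]\right)$.

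Summing this telescoping relation over $s\ge 0$ collapses the right-hand side to $(q-2)\,\Pr[d>0]=q-2$, yielding the exact identity $\sum_{s\ge 0} s\,\Pr[d>s]=q-2$, equivalently $E[(d-1)^2]+E[d-1]=2(q-2)$ (and $E[d(d-1)]=2(q-2)$). For a general number $b$ of trading periods per round the same scheme applies once the survival function is re-derived with $b$ pairing attempts per round; I expect the analogous identity $b\,E[(d-1)^2]+E[d-1]=2(q-2)$, with the factor $b$ reflecting the $b$ trade opportunities available each round.

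I would then turn this identity into a bound on $E[d]$ by convexity. Since $x\mapsto(x-1)^2$ is convex, Jensen's inequality gives $(E[d]-1)^2\le E[(d-1)^2]$, so the identity implies the quadratic inequality $b\,(E[d]-1)^2+(E[d]-1)\le 2(q-2)$. Solving for $E[d]$ and keeping the positive root already gives a bound of exactly the stated shape, $E[d]\le\frac{-1+2b+\sqrt{1+8b(q-2)}}{2b}$, which for $b=1$ reads $\frac{1}{2}+\frac{1}{2}\sqrt{1+8(q-2)}$.

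The last and hardest step is to tighten this to the exact constant in the statement. The claimed bound $\frac{-1+2b+2\sqrt{2b(q-2)}}{2b}$ is precisely the positive root of the completed square $\left(\sqrt{b}\,(E[d]-1)+\frac{1}{2\sqrt{b}}\right)^2=2(q-2)$, so, given the moment identity, it is equivalent to the variance lower bound $\mathrm{Var}(d)\ge\frac{1}{4b^2}$. I would therefore prove this variance floor separately: the delay is a genuinely spread-out integer variable (for $q\ge 4$ it places mass on at least two consecutive values), and I expect $\mathrm{Var}(d)\ge\frac{1}{4b^2}$ to fall out of the explicit pmf, with equality only in the extreme small-$q$ case. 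Substituting $E[(d-1)^2]\ge(E[d]-1)^2+\frac{1}{4b^2}$ into the identity then forces $\left(\sqrt{b}\,(E[d]-1)+\frac{1}{2\sqrt{b}}\right)^2\le 2(q-2)$ and hence the stated bound, specializing to $\frac{1}{2}+\sqrt{2(q-2)}$ when $b=1$. The main obstacle is exactly this variance estimate, together with a clean derivation of the $b>1$ survival probabilities: the plain convexity argument already captures the leading term $\sqrt{2(q-2)/b}$ but overshoots the additive constant, and it is the refinement $\mathrm{Var}(d)\ge 1/(4b^2)$ that pins it down.
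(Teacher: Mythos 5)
Your moment-identity route is genuinely different from the paper's argument (the paper models the process directly: each trading partner is a uniform draw, the expected minimum of $k$ continuous uniform draws on $[0,U]$ is $U/(k+1)$ via Lemma~\ref{lem:min}, and the bound comes from an inequality in $k$ optimized over a residual parameter $h$), and its first half is sound: given the survival function of Lemma~\ref{lem:delayAVG}, your telescoping identity $E[d(d-1)]=2(q-2)$ is correct, and it is a cleaner, fully rigorous starting point than the paper's continuous approximation. The genuine gap is exactly at the step you call the hardest, and your sketch for it would fail. As you note, given the identity, the floor $\mathrm{Var}(d)\ge 1/(4b^2)$ is not merely sufficient for the stated bound --- it is \emph{equivalent} to it, so the entire content of the lemma has been relocated into that floor; but the justification you offer is backwards. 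An integer random variable placing mass $p$ and $1-p$ on two consecutive values has variance $p(1-p)\le 1/4$: support on consecutive integers yields an \emph{upper} bound of $1/4$, not a lower one, and no qualitative ``the variable is spread out'' statement gives a constant variance floor (mass $1-\epsilon$ on a single point gives variance $O(\epsilon)$). The floor is in fact delicate: for $q=3$ the pmf of Lemma~\ref{lem:delayAVG} is a point mass at $d=2$, so $\mathrm{Var}(d)=0$ and the claimed bound $\tfrac12+\sqrt{2}<2$ itself fails; for $q=4$ the pmf is uniform on $\{2,3\}$, so $\mathrm{Var}(d)=1/4$ exactly and there is zero slack. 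Any actual proof must therefore extract quantitative spread from the pmf for $q\ge 4$, e.g.\ by bounding $E[d]=\sum_{s\ge0}\Pr[d>s]$ with $\Pr[d>s]\le e^{-s(s-1)/(2(q-2))}$ and controlling the tail sum --- real work precisely in the additive-constant regime you are trying to pin down. Until that is done, what you have established is only the Jensen bound $E[d]\le\tfrac12+\tfrac12\sqrt{1+8(q-2)}$, which is strictly larger than the stated $\tfrac12+\sqrt{2(q-2)}$ for every $q$, so the lemma is not proved.

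A second, smaller gap: for general $b$ both the survival function and the identity $b\,E[(d-1)^2]+E[d-1]=2(q-2)$ are only conjectured (``I expect''). Lemma~\ref{lem:delayAVG} is derived for one trading period per round, and with $b$ periods the per-round dynamics (one deterministic shift followed by $b$ draws) do not obviously yield a product-form survival function, so this would have to be derived before the $b$-dependent bound can even be set up. By contrast, the paper's heuristic treatment (continuous minima plus optimization over $h$) handles general $b$ uniformly and lands on the stated constant directly; your approach, where it applies, is more rigorous, but it currently terminates one step short of the claimed inequality.
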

\begin{proof}
A packet that enters at position $q-1$ has been served when it advances 
at least $q$ positions. Assume $b=1$ trading periods per round.
Each round starts with a shift of all packets one position ahead followed by $b$ trading periods.
Note that each random trading partner corresponds to a uniform random number
in $[1,q-1]$. To admit a more elegant mathematical treatment we prefer the continuous 
distribution. The absolute difference between the expected values of the discrete and the continuous 
variables is not larger than one. The same bound holds for the difference between the 
expected values of the minimums after $k$ draws.

Assume a random variable $X_c$ that is uniformly distributed in $[L, U]$,
where $L$ and $U$ are integers, such that $L < U$. Let $X_d$ be a random variable
that is calculated from $X_c$ in the following way:
\[
X_d = L + i, \mathrm{ where } \; i \; \mathrm{is\;such\;that}: L + i \cdot A \leq X_c \leq L + (i+1) \cdot A \; ,
\]
where $A = \frac{U-L}{U-L+1}$. 
The random variable $X_d$ corresponds to a uniform discrete random variable in $\{L, L + 1, \dots, U\}$. The absolute difference $X_c - X_d$ is not larger than one.
Consequently, the absolute difference between the minimum $X_{\min}^c$ of $m$ draws of $X_c$ and the corresponding minimum $X_{\min}^d$ of the $m$ values of $X_d$ is also not larger than one. 
The same bound holds for the difference between the expected values of the minimums after $k$ draws.
Thus, we obtain that
\begin{equation}
\label{equ:AvgMaxDiscContOne}
E[X_{\min}^c] - 1 \leq E[X_{\min}^d] \leq E[X_{\min}^c] + 1 \; .
\end{equation}

\noindent
Moreover, note that $k >= 1$, the average minimum of $k$ random draws will be in the lower
half of the interval $[L,U]$, that is in $[L,\frac{L+U}{2}]$. Real values in this interval 
are on average rounded to smaller integer values in the above rounding procedure for $X_c$
to $X_d$. Thus, the average discrete minimum will not be larger than the average
continuous minimum. Thus,
\begin{equation}
\label{equ:AvgMaxDiscCont}
E[X_{\min}^d] \leq E[X_{\min}^c] \leq E[X_{\min}^d] + 1 \; .
\end{equation}

\noindent
As shown in the following lemma, the average value of the minimum of these $k$ draws is $(q-2)/(k+1)$.
\begin{lemma}
\label{lem:min}
Let $X_1, X_2, \dots, X_k$ be continuous uniform random variables in $[0,U]$. 
Let $X_{\min}$ be the minimum of these variables. Then $E[X_{\min}] = \frac{1}{k+1} U$.
\end{lemma}

\begin{proof}
The probability distribution of each continuous uniform random variable $X_i$ is $F_{X_i}(x) = \frac{x}{U}$.
The probability distribution of the minimum $X_{\min}$ is 
\[F_{X_{\min}}(x) = 1 - \prod_{i=1}^k \left( 1-F_{X_i}(x)\right) \; . \] 
Now, applying the definition of the expected value function completes the proof.
\end{proof}

Back to the proof of Lemma~\ref{lem:delayUB}, we assume that a packet has just entered the 
queue at position $q-1$.
Consider $k$, such that the average delay of the packet is $k+1$ rounds. 
Then, after the $k$ rounds and $b k$ draws the packet advances for $h$ additional rounds
until it is served. Then the total delay of the packet is $k+h+1$. 

The average number of draws until it achieves its minimum is $k/2$.
We add one to the average minimum, because the minimum position that can be traded 
is position $1$. Position $0$ is the one that is currently served.
Let $b$ be the number of trading periods per router round. 

\noindent
Thus, we have

\[ \frac{1}{b k + 1} (q-2) + 1 - k/2 - h - 1 \leq 0 \; .\]

\noindent
We solve for $k$ and obtain that the larger of the two roots of k is 

\[k = \frac{-1-2 b h+\sqrt{1-16 b-4 b h+4 b^2 h^2+8 b q}}{2 b} \; . \]

\noindent
The total delay $k+h+1 = \frac{-1+2 b h}{\sqrt{1+4 b^2 h^2-4 b (4+h-2 q)}}$
is minimized at $h = \frac{1}{2b}$ where
\[
k+h+1 = \frac{-1+2 b+2 \sqrt{2} \sqrt{b (-2+q)}}{2 b} \; .
\]

\noindent
For $b=1$ we get that the minimum value of $k+h+1$ is 
\[
\frac{1}{2}+\sqrt{2(q-2)} \; . 
\]

\noindent
The average delay $k+1$ cannot be larger then the above value. This completes the proof of Lemma~\ref{lem:delayUB}.
\end{proof}

The above lemma can be generalized to the case where only one packet in every $c > 0$ packets in the queue is ready to sell its position. We simply assume $b/c$ trading periods per round. 
In this case the average delay of the business packet is not larger 
than $1-\frac{c}{2}+\sqrt{2 c (q-2)}$.

\begin{lemma}
\label{lem:delayLB}
The average delay of the packet is at least $\frac{-1+2b+\sqrt{1-8 b+4 b q}}{2 b}$.
For $b=1$ the bound is $(1/2) +\sqrt{4q - 7}$.\footnote{\hlb{There was a minor error 
in the expression of the bound of this lemma in a previous version of this work. More precisely,
the bound for $b=1$ was $1 +\sqrt{q - 1}$ and the minor error was that the constant $1$ caused by 
using the continuous minimum instead of the discrete minimum was not subtracted from the left hand side 
of Equation}~\ref{equ:delayLBfirst}.}
\end{lemma}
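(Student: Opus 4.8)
The plan is to mirror the proof of Lemma~\ref{lem:delayUB}, but to replace the \emph{upper} estimate of the traded position by a matching \emph{lower} estimate, so that the governing inequality bounds the delay from below rather than from above. Throughout I ignore packet failures, as the statement permits, and I work with the continuous uniform approximation of the random trading partners, paying the discrete-to-continuous price through Equation~\ref{equ:AvgMaxDiscContOne}. The whole argument reduces to one self-consistent inequality that I then solve as a quadratic.

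First I would fix notation exactly as in Lemma~\ref{lem:delayUB}. The packet enters at position $q-1$ and, since a packet sitting at zero-based position $P$ has delay $P+1$, writing $k$ for the number of trading rounds that precede service makes the delay equal to $k+1$; over those rounds the packet makes $bk$ independent trading draws, each a uniform position in $\{1,\dots,q-1\}$, and it always moves to the smallest position it is offered. Let $P_{\min}$ be the minimum position over all $bk$ draws. The key pointwise fact is $d\ge P_{\min}+1$: the packet first attains position $P_{\min}$ in some trading round $s\ge 1$ and is then served in round $s+P_{\min}$, so its delay is at least $P_{\min}+1$. (Equivalently, reading the served round as $\min_s(X_s+s)$ over the draws and using $\min_s(X_s+s)\ge \min_s X_s+\min_s s$ with $\min_s s\ge 1$.)

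Next I would bound $P_{\min}$ from below. Applying Lemma~\ref{lem:min} to the draws shifted into $[0,q-2]$ (interval length $q-2$) gives a continuous minimum with expectation $1+\frac{q-2}{bk+1}$, and by Equation~\ref{equ:AvgMaxDiscContOne} the discrete minimum is at least the continuous one minus one, so $E[P_{\min}\mid k]\ge \frac{q-2}{bk+1}$. This single subtracted unit is exactly the correction whose omission yielded the weaker bound recorded in the footnote to Lemma~\ref{lem:delayLB}. Taking expectations over the random round count and using that $x\mapsto \frac{q-2}{bx+1}$ is convex, Jensen's inequality upgrades the conditional estimate to $E[P_{\min}]\ge \frac{q-2}{b\,E[k]+1}$, which is what removes the apparent circularity between the delay and the number of draws it induces. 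Combining with $E[d]\ge E[P_{\min}]+1$ and writing $\bar k=E[k]=E[d]-1$ yields the governing inequality
\begin{equation}
\label{equ:delayLBfirst}
\bar k \;\ge\; \frac{q-2}{b\bar k+1}\,,
\end{equation}
that is, $b\bar k^2+\bar k-(q-2)\ge 0$.

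Finally I would solve Equation~\ref{equ:delayLBfirst}. Its admissible (positive) branch gives $\bar k\ge \frac{-1+\sqrt{1+4b(q-2)}}{2b}=\frac{-1+\sqrt{1-8b+4bq}}{2b}$, and since the average delay is $\bar k+1$ this produces $E[d]\ge \frac{-1+2b+\sqrt{1-8b+4bq}}{2b}$, with the claimed $b=1$ specialization. I expect the main obstacle to be bookkeeping rather than depth: the two $\pm1$ adjustments---the serving round added in $d\ge P_{\min}+1$ and the unit subtracted in the discrete-to-continuous passage---must be tracked exactly, since, as the footnote records, dropping the latter is precisely what corrupted the constant in the earlier version. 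The second, more delicate point is legitimizing the mean-field replacement of the random draw count by its mean; the convexity/Jensen step above is what turns that replacement into a genuine lower bound rather than a heuristic.
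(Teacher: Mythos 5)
Your proposal takes essentially the same route as the paper's proof: the same expected-minimum computation via Lemma~\ref{lem:min} applied to the $b k$ uniform draws, the same discrete-to-continuous correction of Equation~\ref{equ:AvgMaxDiscCont} (including the subtracted unit whose earlier omission is recorded in the footnote), and the same self-consistent quadratic inequality $b\bar k^2+\bar k-(q-2)\ge 0$ solved for its positive root to get $\bar k+1 \ge \frac{-1+2b+\sqrt{1-8b+4bq}}{2b}$. The only difference is one of polish rather than substance: your pointwise bound $d\ge P_{\min}+1$ and the Jensen/convexity step make explicit what the paper handles by informal position-and-shift bookkeeping with $k$ treated directly as the average delay.
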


\begin{proof}
Assume $k+1$ rounds with $b=1$. 
The continuous average minimum of $k$ rounds with $b$ trading periods is $(q-2)/(bk+1)$.
From Equation~\ref{equ:AvgMaxDiscCont} we obtain that the average discrete minimum is at 
least $(q-2)/(bk+1) - 1$. 
We will add one to this number because the minimum possible trade is position 1.
In the best case the minimum is achieved with the first draw. In the remaining $k$ rounds the packet will be (in any case) shifted by $k-1$ positions (in each round except the round when it entered the queue).
This number of simple steps/shifts is subtracted from the min. 
Finally, a delay of one round is needed to serve the packet, when it arrives at position 0.
Consequently,

\begin{equation}
\label{equ:delayLBfirst}
\left(\frac{1}{bk+1} (q-2) - 1 \right) + 1 - (k-1) - 1 \leq 0 \; .
\end{equation}

\noindent
From the above inequality and the fact that $k$ is positive we obtain 

\[
k \geq \frac{-1+\sqrt{1-8 b+4 b q}}{2 b} \; .
\]

\noindent
Using $b=1$ the expression is simplified to 
$k \geq -(1/2) + \sqrt{4q-7}$. Thus the average delay is at least

\[
k + 1 \geq \frac{1}{2} + \sqrt{4q-7} \; .
\]
\end{proof}

The previous lemma can also be generalized to the case where only one packet in every $c > 0$ packets in the queue is ready to sell its position. In this case the average delay of the business packet is not less than $\frac{1}{2}(2-c)+\frac{1}{2} \sqrt{c^2+ - 8 c + 4 q c}$.

In Figure~\ref{fig:Delays}, analytical and experimental results for the delay of the business packet are presented.

\begin{figure}[!h]
\subfloat[The exact average delay (inner line) and the lower and upper bounds on the average delay]{
\label{fig:delaysXXX}\includegraphics[width=0.4\textwidth]{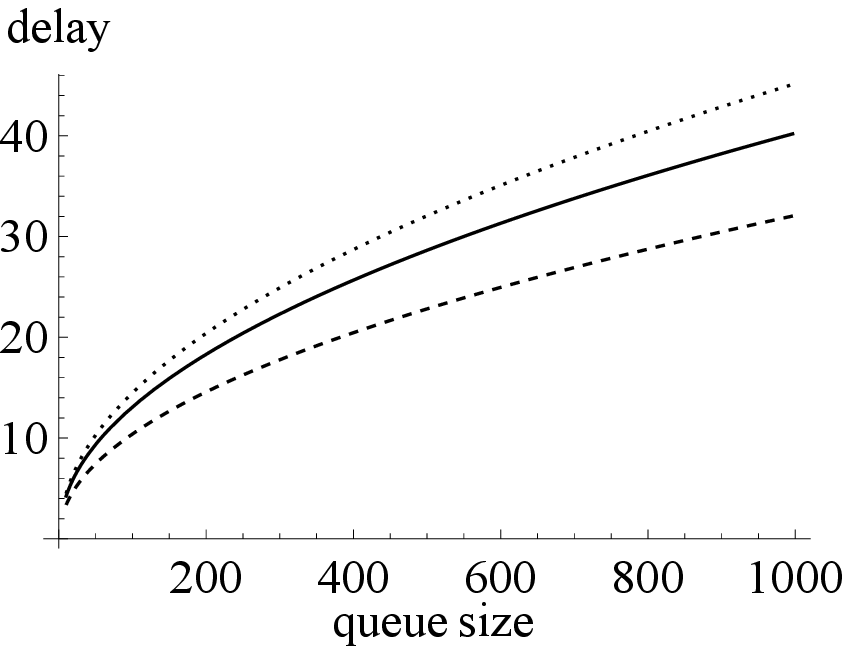}}
\hspace{0.1\textwidth}
\subfloat[Experimental measurement of the delay.]
{\label{fig:valuesXXX}\includegraphics[width=0.44\textwidth]{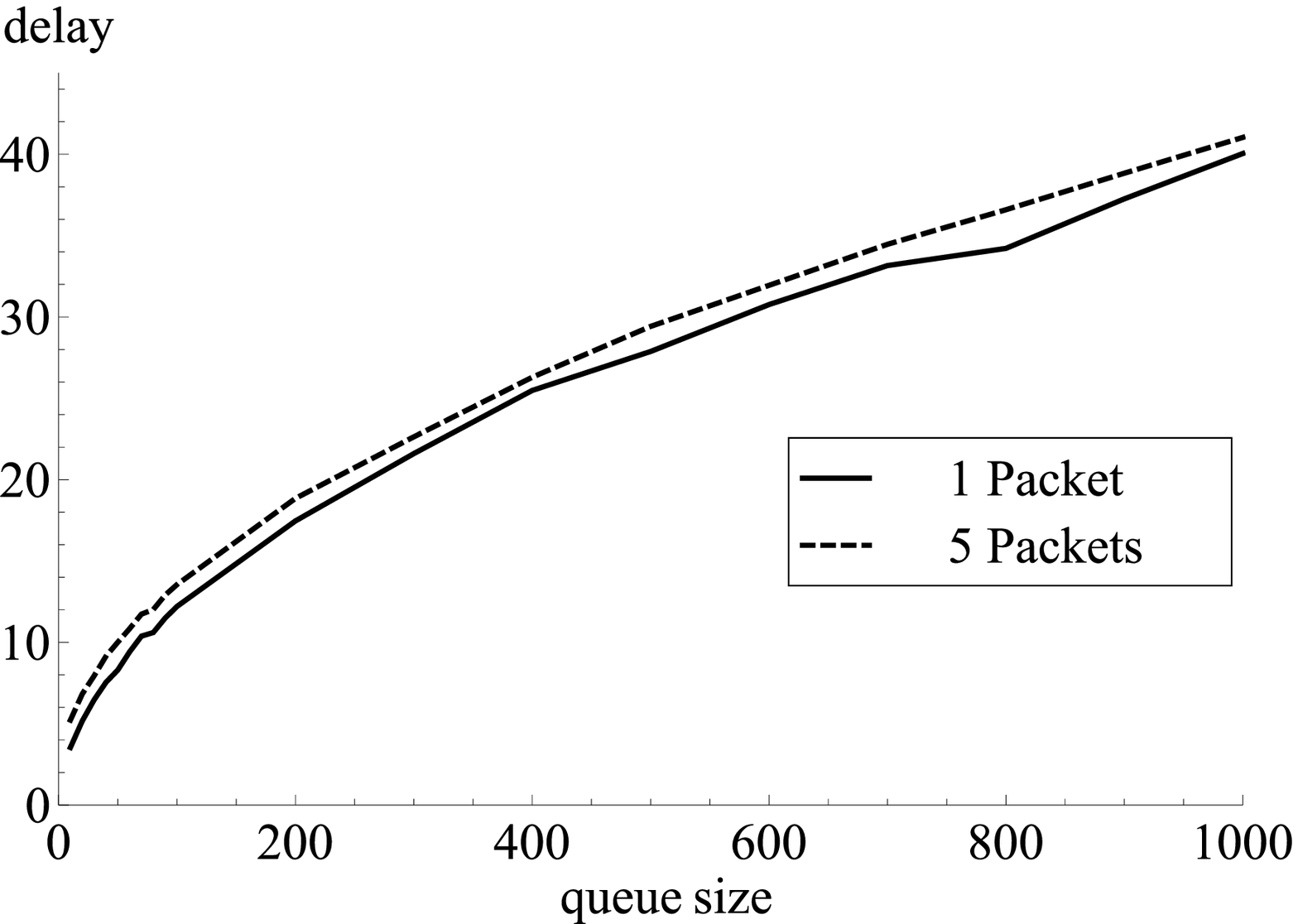}}
\caption{Delay of the business packet with respect to the queue size} \label{fig:Delays}
\end{figure}

\subsection{An Example of the Effect of Trades}
\label{sec:anarchy}

The effect of bilateral trades on the overall convergence of an economy is studied for 
example in~\cite{Feldman:1973:Bilateral}. In this section we examine the social wealth 
of the PacketEconomy for the cases of no trades, trades and an ideal scheduling of the packets.  
In our model we do not expect to obtain an optimal solution for every queue snapshot. 
This would be too ambitious
given the real-time work load of network routers and the requirement 
for $O(1)$ (parallel) time of processing per round. 
However, we hope to achieve significant improvements with a lightweight 
$O(1)$ time procedure. The theoretical arguments and the experimental
evaluation show that PacketEconomy can support better coordination between selfish network packets 
which leads to significant improvements for both flows and routers.

We assume window-based flows and two types of packets, economy packets and business packets.
We examine and compare the cases of no-trades, trades and ideal trades.
The social wealth or wealth rate is the total utility per round. Note that the money 
circulating in the economy remains constant and thus the total utility depends on the 
values of the packets.

A simple but somewhat surprising result is that if the max value
of economy packets $V_e$ and the max value of business packets $V_b$ are
equal then the social wealth is independent from the scheduling algorithm.
Any schedule that respects the packet deadlines achieves the same
social wealth.

\begin{lemma}
\label{lem:SocialWealthEqualVmax}
Assume a router queue with two-types of window-based flows, economy flows and business flows. 
If $V_e = V_b$, then the social wealth is the same for any feasible schedule (by feasible we mean 
a schedule where no packet exceeds its delay limit).
\end{lemma}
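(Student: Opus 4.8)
The plan is to reduce the social wealth to a long-run average of created packet value and then show that this average splits into a schedule-independent term and a term that collapses to a constant exactly when $V_e=V_b$. Following the remark preceding the lemma, the total money in the economy is conserved (each trade transfers $\rho$ from a buyer to a seller, so $\sum_p a_p$ stays equal to the fixed amount $m$) and therefore contributes nothing to the wealth that is actually \emph{created}. Accordingly I would define the social wealth as the long-run rate of total packet value delivered,
\[
W=\lim_{T\to\infty}\frac{1}{T}\sum_{\text{packets served in }[0,T]} v_p(d_p)\,,
\]
which exists as a stationary average by the ergodicity of Lemma~\ref{lem:Markov} and the stationary distribution of Lemma~\ref{lem:sce1pi}, and work with this quantity.

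First I would group the served packets by flow and invoke two conservation identities that hold for \emph{any} feasible work-preserving schedule. For a window-based flow $f_i$ with fixed window $w_i$, exactly $w_i$ of its packets are in flight at every round, so over $T$ rounds it occupies $w_i T$ ``packet-rounds''; up to boundary terms that vanish as $T\to\infty$ this equals the sum $\sum_p d_p$ of the delays of its completed packets. Writing $\lambda_i$ for the completion rate of flow $i$ so that it completes $n_i\approx\lambda_iT$ packets, this is Little's law $\lambda_i\bar d_i=w_i$. The second identity is that the router serves one packet per round while the queue is always full ($\sum_i w_i=q$), so that $\sum_i\lambda_i=1$.

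Next I would use feasibility to linearize the value function. Since no packet exceeds its delay limit, every served delay lies in the region where $v_p(d)=V_i-c_i d$ and the truncation $\max\{\cdot,0\}$ is never active; hence $\sum_p v_p(d_p)=n_iV_i-c_i\sum_p d_p$ for flow $i$. Dividing by $T$ and substituting the two identities gives a per-flow rate $\lambda_iV_i-c_iw_i$, so that summing over flows yields
\[
W=\sum_i\lambda_iV_i-\sum_i c_iw_i\,.
\]
The second sum depends only on the fixed windows and cost factors, never on the serving order. Imposing the hypothesis $V_i\equiv V$ (that is, $V_e=V_b=V$) turns the first sum into $V\sum_i\lambda_i=V$ by the throughput identity, giving $W=V-\sum_i c_iw_i$, which is manifestly independent of the schedule.

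The main obstacle is not the algebra but making the two conservation identities rigorous in this discrete closed-loop setting: I must argue that the stationary completion rates $\lambda_i$ are well defined and that the window-occupancy accounting $\sum_p d_p=w_iT$ holds with vanishing boundary error in the long-run average, which is precisely where the ergodic stationary regime of Lemmas~\ref{lem:Markov} and~\ref{lem:sce1pi} is used. A secondary point I would verify is that feasibility genuinely keeps every delay in the linear region, so that the truncation in $v_p$ can be dropped uniformly; were some flow's packets allowed to reach value zero, the cancellation $\sum_i\lambda_iV_i=V$ would break down and the result would no longer be schedule-independent.
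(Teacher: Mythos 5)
Your proof is correct, but it takes a genuinely different route from the paper. The paper's entire proof is one sentence: ``a direct application of Equation~\ref{equ:priceWindowBased}'', i.e.\ of the window-based benefit-rate/compensatory-price formula. The intended argument there is local: for a window-based packet the benefit rate decomposes as $v(d)/d - c + a/d = V/d - c + a/d$ (up to money, which the paper dismisses by conservation), so when $V_e=V_b=V$ an exchange of positions between any two packets leaves the sum of their rate contributions $V/d_1+V/d_2-c_1-c_2$ unchanged by symmetry; since feasible schedules differ by such exchanges, the social wealth is invariant. You instead prove a global decomposition: defining wealth as the long-run delivered-value rate, you combine the window-occupancy identity $\lambda_i\bar d_i=w_i$ (Little's law), the unit-throughput identity $\sum_i\lambda_i=1$, and the linearization $v_p(d)=V_i-c_id$ (valid exactly because feasibility keeps every delay below the deadline where the truncation bites) to get $W=\sum_i\lambda_iV_i-\sum_ic_iw_i$, which collapses to the constant $V-\sum_ic_iw_i$ when $V_i\equiv V$. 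The two arguments rest on the same algebraic fact --- the cost term is delay-independent and the value term depends on delay only through $V/d$ --- but yours buys more: it applies to arbitrary feasible schedules without needing the (unstated) swap-connectivity argument, it makes explicit where ergodicity and stationarity (Lemmas~\ref{lem:Markov} and~\ref{lem:sce1pi}) are used to justify the limits, and it handles money cleanly by conservation of delivered totals rather than by assertion; it also yields the explicit schedule-independent value $V-\sum_ic_iw_i$, which the paper's proof never exhibits. The paper's one-liner, in exchange, ties the lemma directly to the trading mechanism and its price formula, which is thematically closer to the rest of the section but is not a complete proof as written.
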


\begin{proof}
The proof is a direct application of Equation~\ref{equ:priceWindowBased}.
\end{proof}

Let $n_e$ be the number of economy packets, $V_e$ their max value, $c_e$ their the cost per round of delay, and $d_e$ their delay. Similarly, we have $n_b$, $V_b$, $c_b$, and $d_b$ for the business packets.
The size of the queue is $q = n_e + n_b$. 

\subsubsection{No Trades}
When no trades take place then all packets experience the same delay $d_e=d_b=d$, which is equal to the
queue size $q$. The rate for each packet is $1/d$.
The social wealth is utility of the economy packets plus the utility of the business packets.
Each economy packet generates a benefit of $V_e-c_e q$ in each round-trip (from its submission until it is served). The round-trip time of the packet is equal to its total delay $d_e$, which is equal to $q$. The same holds for the business packets. 

Thus, the delays are $d_b = d_e = q$ and the social wealth (total utility generated) is 
\begin{equation}
\frac{n_e}{q}(V_e-c_e q) + \frac{n_b}{q}(V_b-c_b q) \; .
\end{equation}

\subsubsection{Ideal Trades}
In the ideal case, the economy packets will consume all their delay and release in this way 
as much bandwidth as possible. Note that even two business packets could consume up to the
whole bandwidth (if the other packets sell).
The available bandwidth is then evenly distributed among the 
business flows.
The social wealth is 
\begin{equation}
\frac{n_e}{4q}(V_e-c_e 4q) + \frac{4q-n_e}{4q}(V_b-c_b \frac{n_b 4 q}{4q - n_e})
\end{equation}

\subsubsection{PacketEconomy Trades}
\label{sec:PETrades}
Let us start with the case of a queue with one business packet, that is, $n_b = 1$ and $n_e = 99$.
From Lemma~\ref{lem:delayUB}, we obtain that $d_b \leq 1/2 + \sqrt{2(q-2)} = 14.5$.
The computationally heavy exact computation of the expected value gives $E[d_b] = 13.08$ and experimental results give $\bar{d_b}=12.8, \sigma(d_b)=6.7$.

Let us now consider the case of a queue with $n_b = 5$ business packets and $n_e = 95$ economy packets.
From Lemma~\ref{lem:delayLB} we get $d_b \geq 10.41$, since the 
the delay of each business packet cannot be smaller than for the case $n_b = 1$. Experimentally, $\bar{d_b}=14.5, \sigma(d_b)=7.4$. Then

\[
r_b = \frac{n_b}{d_b} \leq \frac{5}{10.41} \leq 0.48 \quad \mathrm{and} \quad
 r_e = 1 - r_b \geq 0.52.
\]
Consequently, the probability that the packet at position $i$ is an economy packet is at least $0.54$.
Moreover, the average delay of an economy packet is 
\[
d_e = \frac{n_e}{r_e} \leq \frac{95}{0.52} \leq 182.7 < 400. 
\]
Experimentally, $\bar{d_e}=144.7, \sigma(d_e)=61.5$. That is, on average an economy packet is not expected to sell more than $83$ rounds of delay.
Thus, we can assume that almost all economy packets will not exhaust their extra delay of $300$ rounds.
Consequently, every time the business packet is paired with a preceding economy packet,
the economy packet will be able to trade its position.

However, what is the probability that the trading partner of a business packet 
will be an economy packet? There are constantly $n_e = 95$ economy packets in
the queue, but in general these packets will not be distributed uniformly within
the $q=100$ queue positions. For example, the concentration of economy packets might
be higher near the end of the queue.

Assume an arbitrary queue round. 
For $i=0,1, \dots, 99$, let $p_{e,i}$ be the probability that the packet in queue position $i$
is an economy packet. First note that $p_{e,0} = r_e \geq 0.52 \;. $
Moreover, every economy packet that reaches position $0$, must have passed through every other 
queue position at least once. This means that for each queue position, 
the rate of economy packets must be at least $r_e$,
else the global rate $r_e$ would not be sustainable. Thus
\[ \mathrm{for} \; \mathrm{all} \; i=0,1, \dots, 99, \; p_{e,i} \geq 0.52 \; . \]
We can apply Lemma~\ref{lem:delayUB} to upper bound the delay $d_b$ 
of the business packets. This time we use $c = 1/0.52$ and obtain
\[d_b \leq 19.45 \;. \]
Again, the delay is significantly reduced compared to the delay of 100 in the case of no trades.
In such scenaria with two types of packets, the business packets may reduce 
their delay from $q$ to $O(\sqrt{q})$, which is a significant improvement.
It is of course an interesting problem to study the behavior 
of the PacketEconomy approach in more complex scenaria
where the queue may contain a blend of several packet types
and examine if then improvements of $\sqrt{q}$ are still possible.

Finally, substituting the indicative values $V_e = 500, V_b = 1000, c_e = 1, c_b = 4$ 
and using the bandwidth sharing approach shown in Example~\ref{exa:rates}
we can calculate the wealth rate of the economy in the example.
Recall that the size of the queue is $q=100$ and that the maximum delay for
any packet is $4q = 400$.

\hla{First we consider the case of $n_b=1$ business packet and $n_e=99$ economy packets.
In the case of no trades, i.e., a simple FIFO queue, all packets experience a delay 
of $100$ rounds. 
In this case $r_b = 0.01$ and the value of the business packet is $600$.
The total rate for the economy packets is $r_e = 0.99$ and the value of each 
economy packet $400$. Combining the above gives that the average wealth per 
round generated by the economy is $402$ (also $402$ experimentally). 

In the ideal case all economy packets will consume their delay of $400$ rounds
to free bandwidth for the business packet. This would mean
$r_e = n_e/d_e = 0.2475$. Consequently $r_b = 1 - r_e = 0.7525$
and $d_b = n_b/d_b = 1.33$. The corresponding wealth rate for the economy 
would $773.25$. Recall that the transfer of budget from one packet to another 
does not change the total wealth, and, consequently,
we did not consider how budget circulates.

In the above ideal case the delay for the business packet is $1.33$
which is practically not possible for the PacketEconomy. For $q \geq 2$, the delay 
of any packet is at least $2$; one round to enter the queue and one round to be served. 
The best feasible delay for the business packet would be $d_b = 2$.
Doing the calculations gives a wealth rate of $647$ per router round. 
In experiments with $1000$ trading periods, i.e., $10$ times the size of the queue, 
we achieved a wealth rate of $644.6$ per router round.

For the case of the PacketEconomy with trades with random pairing and 
one trading period per round we use the bounds on the packet rates and
delays obtained earlier. This gives that the wealth rate is $\geq 431.52$
per round (experimentally, $436.1$).

Similarly, for $n_b=5$ we obtain the wealth rates $410$ (also $410$ experimentally), $766.25$ ($766.2$ experimentally with $1000$ trading periods) and $\geq 515.75$ ($556.6$ experimentally), 
for no trades, ideal trades, and PacketEconomy, respectively.
Note that the important feature are the self-adjusted packet delay times of 
the economy and not the wealth rate itself.
}

%packetecoexp.tex
% !TeX root = packeteco.tex
\section{Experimental Evaluation}
\label{sec:exp}

We have performed \hla{two} sets of experiments in order to verify and quantify our theoretical predictions. The first set of experiments simulates the scenario presented in Section~\ref{sec:ScenarioNoFiat}.

\begin{figure}[h!]
\centering
\includegraphics[width=\textwidth]{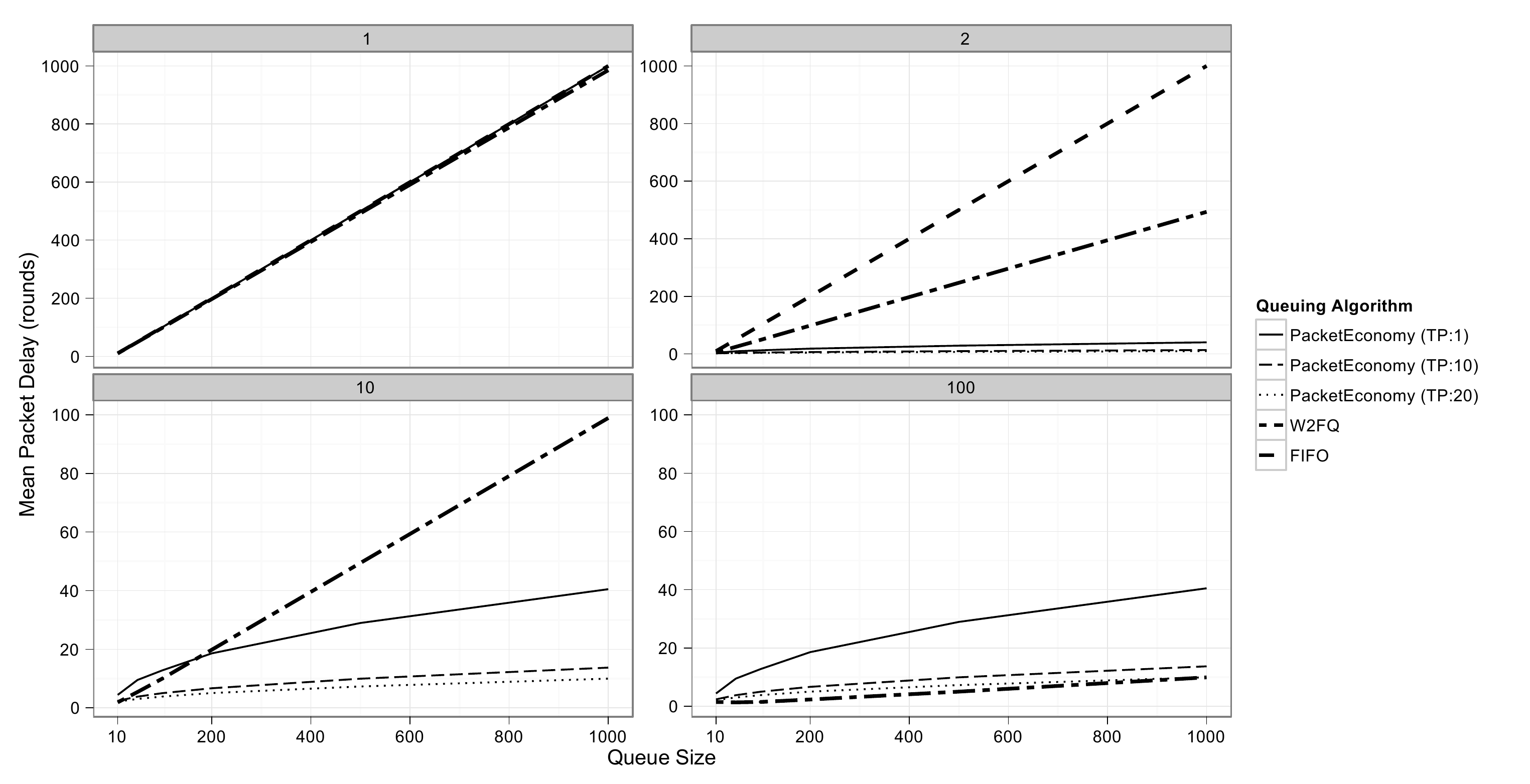}
\caption{Mean business flow packet delay ($d_b$) comparing PacketEconomy (trading periods $b\in\{1,10,20\}$) vs FIFO and $W^{2}FQ$. The figure consists of four panels for four values of the business flow cost per round of delay $c_b\in\{1,2,10,100\}$. In the two bottom panels ($c_b\in\{10,100\}$) the FIFO flow has been removed for enhanced clarity.}
\label{fig:ExpDelay}
\end{figure}

%\begin{figure}[h!]
%\begin{minipage}[b]{0.48\linewidth}
%\centering
%\includegraphics[width=\textwidth]{figPacketDelayExpScenNoFiat_1}
%\caption{PacketEconomy vs FIFO and $W^{2}FQ$ with weights 10 and 100 for the business flow.}
%\label{fig:ExpDelay}
%\end{minipage}
%\hspace{0.01\linewidth}
%\begin{minipage}[b]{0.47\linewidth}
%\centering
%\includegraphics[width=\textwidth]{ScenarioFiat_1}
%\caption{Player utility. The stem in the middle of each group is the average player utility.}
%\label{fig:ScenarioFiat}
%\end{minipage}
%\end{figure}

In this experiment we compare the delay of one one-packet window business flow in increasingly large router queues ($q\in\{10,..,1000\}$). Firstly, we compare the mean packet delay of this flow when using PacketEconomy with $c_b\in\{1,..,100\}$  to using the $W^{2}FQ$ (Worst-case Fair Weighted Fair Queueing) ~\cite{BZW2FQ} algorithm with weights for the flow equal to $c_b$  and to using a simple FIFO queuing policy. In the experiments $c_e=1, V_e=4q, V_b=4qc_b$ and for each run we simulated serving $100q$ packets. The results shown in Figure \ref{fig:ExpDelay} show a subset of values for $c_b$ since FIFO always performs the same. Also, the PacketEconomy algorithm also performs the same for all $c_b>1$ and is only different for $c_b=c_e=1$ since then the delay is identical to FIFO. Finally, $W^{2}FQ$ progressively decreases the business flow's mean delay as its weight ($=c_b$) increases. The conclusion which can be drawn from these experimental results is that PE outperforms FIFO and that it is comparable to $W^{2}FQ$ for \hla{$b>c_p$}.

\begin{figure}[h!]
\centering
\includegraphics[width=\textwidth]{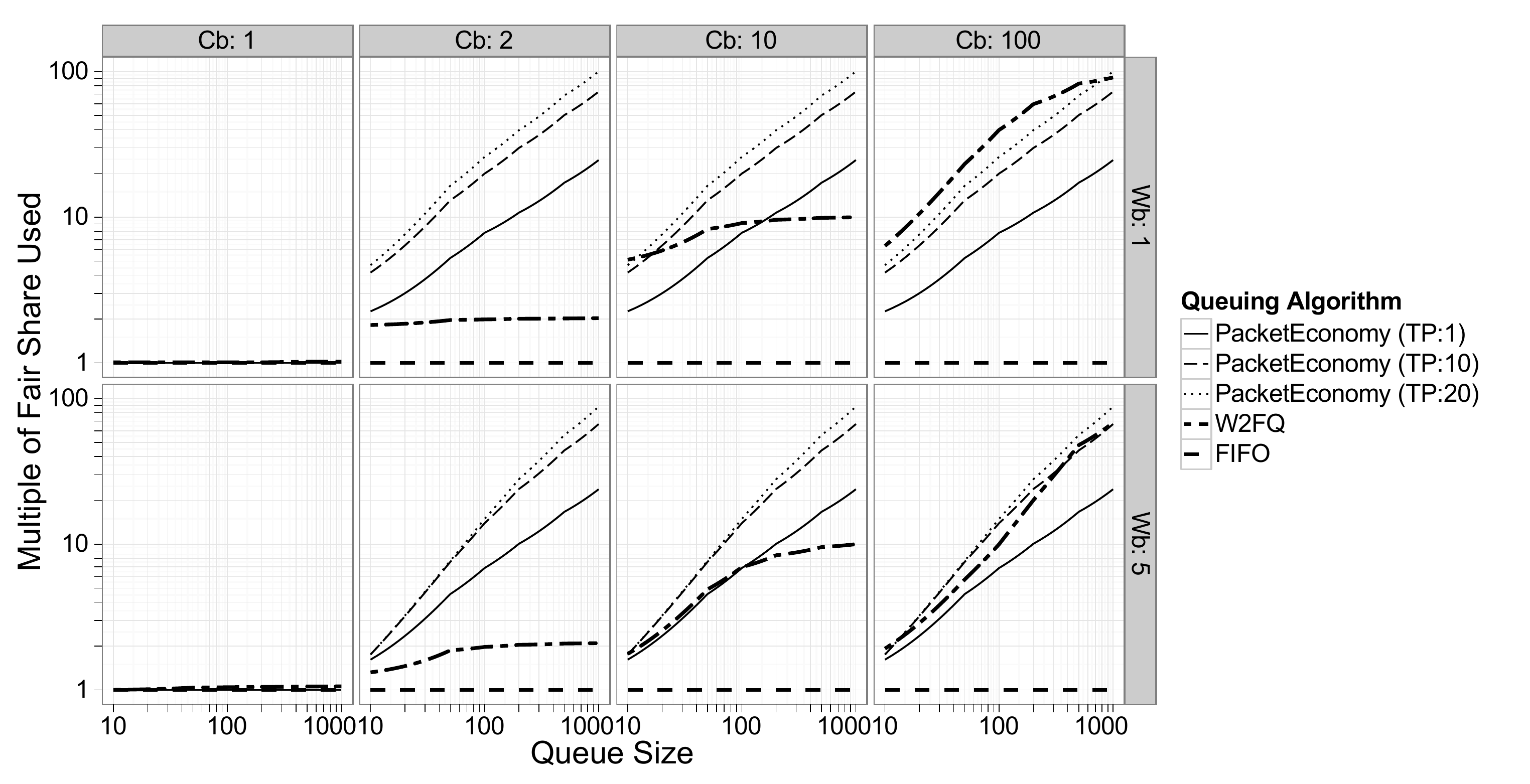}
\caption{Multiple/ratio of Fair Share used by the business flow comparing PacketEconomy (trading periods $b\in\{1,10,20\}$) vs FIFO and $W^{2}FQ$. The plots are displayed log-log to highlight power functions. The columns vary the business flow cost per round of delay $c_b$ value and the rows the business flow window size $n_b$ value.}
\label{fig:ExpFairShare}
\end{figure}

Secondly, we examine how much traffic is consumed (packets sent) by the business flow by comparing the flow's share of the packets sent to the flow's fair share. For the same set of parameters described above, plus the case for $n_b=5$, we determine the fair share of a flow $f_i$ to be $h_i=w_i/q$ and the we calculate the ratio of the fair share taken as
\[ 
\lambda_{h_i}=\frac{a_i}{\sum_{i}^{}{a_i}}\frac{1}{h_i}
\]
where $a_i$ is number of packets delivered by flow $f_i$. This metric expresses how much more traffic than it would normally (with FIFO) be able to take up the flow managed to acquire. 
The results shown in Figure \ref{fig:ExpFairShare} indicate that, rather obviously, for $c_b=1$ the business flow gets its fair share under all algorithms. However, as $c_b$ increases, PacketEconomy is able to provide exponential ratios of the fair share and for \hla{$b>c_b$} can even be comparable to $W^{2}FQ$.

%\hl{In the second experiment, we examine a the case of fiat money. The experimental set-up consists of two players, a router queue of length 50, business flows with $v_p(d) = \max \{ 400 - 8 \cdot d, 0 \}$ and economy flows with $v_p(d) = \max \{ 100 - d, 0 \}$. All the experiments were executed for 10000 rounds. Each player controlled one flow with a congestion window of 10 packets.
%We examined three cases of packet trading: a) no packet trading, b) only half of the players participating in packet trading and c) all of the players participating in packet trading.
%As is shown in Figure }\ref{fig:ScenarioFiat}, \hl{the average player utility increases as more players participate in trading, with no utility decrease in any individual player. In these experiments, business flows seem to benefit the most from trading as they have highest cost for delay. More specifically, economy flows gain approximately 60\%  and business flows 110\% more utility when participating in trades.}

The \hla{second} set of experiments examines the quality of random pairing and trading as a scheduling algorithm. We used a simple queue with business and economy packets. The business packet utility costs four times more per round of delay compared with the economy packets. Three cases were examined: a) no scheduling, i.e., packets are served in order, b) optimal scheduling and c) random pairing scheduling. Optimal scheduling finds the optimal solution to this problem. 

\begin{figure}[h!]
\begin{minipage}[b]{\linewidth}
\centering
\includegraphics[width=\textwidth,height=0.4\textwidth]{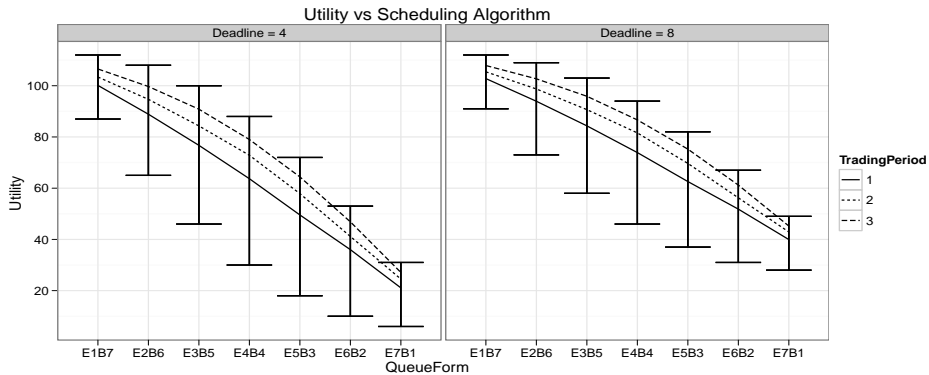}
\caption{Utility of random pairing scheduling in comparison with optimal scheduling(max) and no scheduling(min). $ExBy$ denotes $x$ economy packets followed by $y$ business packets}
\label{fig:Scheduling}
\end{minipage}
\end{figure}

For the random pairing scheduling case, we performed one, two or three random trading periods per round to investigate the extent to which multiple trading periods will improve performance. Two sub-cases were examined, where the deadline of the economy packets is 4 or 8 rounds. The deadline for the business packets was always 8 rounds. We also examined whether different queue compositions would give different results and therefore tested all the possible business and economy packet queue compositions.

In Figure \ref{fig:Scheduling}, the error bars represent the optimal scheduling utility (max) and the no scheduling (min) utility. It can be seen that in all the cases random pairing performs close to the optimal solution. Additionally, increasing the number of trading periods yields improved results.

\section{Conclusion}
\label{sec:conclusion}
We presented a network economy and showed the existence of NE 
where money circulates to the benefit of the flows.
The computational requirements of a router that would implement
the PacketEconomy approach are at an acceptable level. This is very important
since network routers have to process massive streams of network packets in
real time.

There are several other issues that have to be addressed
for our model to be of practical importance. For example,
a greedy flow may submit economy packets to the network simply to collect
money. A realistic economic model has to anticipate such scenarios and address 
them with appropriate rules. For example, a general rule could be that the final 
final budget of any packet could be restricted to be non-positive. A more effective 
rule could impose router-entry costs on every packet that depend on the current load 
of the router. \hla{The general topic of countering malicious behavior is of-course 
a never ending game between service providers and legitimate users on the one side 
and malicious entities or users on the other side.
Note that even simple and established flow control algorithms like AIMD 
are prone to malicious flow behavior.}
Finally, the burden of accounting has also to be handled.
For example, determining which network entities will be responsible for managing currency 
issues.

Overall, we examined how money can be used at a microeconomic level 
as a coordination tool between network packets and we believe that 
our results show that the PacketEconomy approach defines an 
interesting direction of research for network games. We 
are currently working on implementing the PacketEconomy in a 
realistic network context.

\vspace{0.2cm}
\noindent
{\bf Acknowledgements.}
The first author, is grateful to Paul Spirakis for inspiring discussions on
the intersection of Algorithms and Game Theory. We also wish to thank Vasilis Tsaoussidis
for insightful discussions on using budgets in Internet router queues.

\end{document}